\renewcommand{\theequation}{\thesection.\arabic{equation}}
\begin{document}

\renewcommand{\theequation}{\arabic{section}.\arabic{equation}}
\def\t{mathbb{T}}
\def\r{\mathbb{R}}
\def\H{\mathcal{H}}
\def\x{\mathcal{X}}
\def\y{\mathcal{Y}}
\def\C{\mathbb{C}}
\def\n{\mathbb{N}}
\def\Z{\mathbb{Z}}
\def\<{\langle}
\def\>{\rangle}
\def\D{\boldsymbol{\mathcal{D}}}
\def\F{\mathcal{F}}
\def\2{L^2}
\def\c{L^2([0,T];H^1(0,1))}
\def\i{\infty}
\def\h{\hat}
\def\v{\vec}
\def\lam{\lambda}
\def\ds{\displaystyle}
\newtheorem{thm}{\bf Theorem}[section]
\newtheorem{defn}[thm]{\bf Definition}
\newtheorem{note}[thm]{\bf Notations}
\newtheorem{lem}[thm]{\bf Lemma}
\newtheorem{cor}[thm]{\bf Corollary}
\newtheorem{rem}[thm]{\bf Remark}
\newtheorem{open}[thm]{Open Problem}
\newtheorem{rems}[thm]{\bf Remarks}
\newtheorem{example}[thm]{\bf{Example}}
\newtheorem{assumption}[thm]{\bf{Assumption}}
\newtheorem{prop}[thm]{\bf{Proposition}}
\newcommand{\justkdv}[1]{  #1 _t  +#1_x +#1  #1 _x +  #1 _{xxx}=0}
\newcommand{\kdv}[1]{  #1 _t  +#1_x +#1  #1 _x +  #1 _{xxx}=0,\quad u(x,0)=\phi(x) }
\newcommand{\dt}{\psi (\delta ^{-1} t)}
\newcommand{\dxi}{(\xi ^3 -4\xi )}
\newcommand{\taoxi}{\left (1+|\tau -\dxi |\right )}
\newcommand{\bfactor}{\left <i(\tau -\dxi )+3\xi ^2\right >^{2b}}
\newcommand{\zetafactor}{\left <\tau -\zeta \right >^{2b}}
\newcommand{\sfactor}{\left <\xi \right >^{2s}}
\newcommand{\rinfty}{\int ^{\infty}_{-\infty}}
\renewcommand{\abstractname}{}
\renewcommand{\theequation}{\thesection.\arabic{equation}}
\numberwithin{equation}{section}
\theoremstyle{definition}
\renewcommand{\proofname}{\noindent Proof}
\newtheorem{Definition}{\noindent Definition}[section]
\renewcommand{\refname}{\center{References}}

\title{\Large Cyclic Lattices, Ideal Lattices and Bounds for the Smoothing Parameter}

\author{Zhiyong Zheng$^{~a}$ ~~ Fengxia Liu$^{~b,*}$ ~~ Yunfan Lu$^{~c}$ ~~ Kun Tian$~^{~d}$ \\[10pt]
\small {$^{a,b,c,d }$  Engineering Research Center of Ministry of Education for Financial Computing} \\
\small {and Digital Engineering, Renmin University of China,}\\
\small { Beijing, 100872, P. R. China}\\
[5pt]}

\footnotetext{*Corresponding author (Fengxia Liu).\\
E-mail addresses:  liu$\_$fx@ruc.edu.cn  (F. Liu).}

\date{}
\maketitle

\noindent \textbf{Abstract:} Cyclic lattices and ideal lattices were introduced by Micciancio in \cite{D2}, Lyubashevsky and Micciancio in \cite{L1} respectively, which play an efficient role in Ajtai's construction of a collision resistant Hash function (see \cite{M1} and \cite{M2}) and in Gentry's construction of fully homomorphic encryption (see \cite{G}). Let $R=Z[x]/\langle \phi(x)\rangle$ be a quotient ring of the integer coefficients polynomials ring, Lyubashevsky and Micciancio regarded an ideal lattice as the correspondence of an ideal of $R$, but they neither explain how to extend this definition to whole Euclidean space $\mathbb{R}^n$, nor exhibit the relationship of cyclic lattices and ideal lattices.

In this paper, we regard the cyclic lattices and ideal lattices as the correspondences of finitely generated $R$-modules, so that we may show that ideal lattices are actually a special subclass of cyclic lattices, namely, cyclic integer lattices. In fact, there is a one to one correspondence between cyclic lattices in $\mathbb{R}^n$ and finitely generated $R$-modules (see Theorem \ref{th4} below). On the other hand, since $R$ is a Noether ring, each ideal of $R$ is a finitely generated $R$-module, so it is natural and reasonable to regard ideal lattices as a special subclass of cyclic lattices (see corollary \ref{co3.4} below). It is worth noting that we use more general rotation matrix here, so our definition and results on cyclic lattices and ideal lattices are more general forms. As application, we provide cyclic lattice with an explicit and countable upper bound for the smoothing parameter (see Theorem \ref{th5} below). It is an open problem that is the shortest vector problem on cyclic lattice NP-hard? (see \cite{D2}). Our results may be viewed as a substantial progress in this direction.

\vspace{0.4cm}
\noindent \textbf{Keywords:} Cyclic Lattice, Ideal Lattice, Finitely Generated $R$-module, Smoothing Parameter.

\numberwithin{equation}{section}
\section{Discrete Subgroup in $\mathbb{R}^n$}

Let $\mathbb{R}$ be the real numbers field, $\mathbb{Z}$ be the integers ring and $\mathbb{R}^n$ be Euclidean space of which is an $n$-dimensional linear space over $\mathbb{R}$ with the Euclidean norm $|x|$ given by
\begin{equation*}
|x|=(\sum_{i=1}^{n} x_i^2)^{\frac{1}{2}},\quad \text{where}\ x'=( x_1,x_1,  \cdots,  x_{n})
     \in \mathbb{R}^n.
\end{equation*}
We use column vector notation for $\mathbb{R}^n$ through out this paper, and $x'=(x_1,x_2,\dots,x_n)$ is transpose of $x$, which is called row vector of $\mathbb{R}^n$.\\

\noindent \textbf{Definition 1.1} Let $L\subset \mathbb{R}^n$ be a non-trivial additive subgroup, it is called a discrete subgroup if there is a positive real number $\lambda>0$ such that
\begin{equation}
\min\limits_{x\in L,x\neq 0} |x|\geqslant \lambda>0.
\end{equation}
As usual, a ball of center $x_0$ with radius $\delta$ is defined by
\begin{equation*}
b(x_0,\delta)=\{x\in \mathbb{R}^n\ \Big|\ |x-x_0|\leqslant \delta\}.
\end{equation*}
If $L$ is a discrete subgroup of $\mathbb{R}^n$, then there are only finitely many vectors of $L$ lie in every ball $b(0,\delta)$, thus we always find a vector $\alpha\in L$ such that
\begin{equation}
|\alpha|=\min\limits_{x\in L,x\neq 0} |x|=\lambda>0,\quad \alpha\in L.
\end{equation}
$\alpha$ is called one of shortest vector of $L$ and $\lambda$ is called the minimum distance of $L$.

Let $B=[\beta_1,\beta_2,\dots,\beta_m]\in \mathbb{R}^{n\times m}$ be a $n\times m$ dimensional matrix with rank$(B)=m\leqslant n$, it means that $\beta_1,\beta_2,\dots,\beta_m$ are $m$ linearly independent vectors in $\mathbb{R}^n$. The lattice $L(B)$ generated by $B$ is defined by
\begin{equation}
L(B)=\sum_{i=1}^{m} x_i \beta_i =\{Bx\ |\ x\in \mathbb{Z}^m\},\quad \forall x_i\in \mathbb{Z}.
\end{equation}
which is all linear combinations of $\beta_1,\beta_2,\dots,\beta_m$ over $\mathbb{Z}$. If $m=n$, $L(B)$ is called a full-rank lattice.

It is a well-known conclusion that a discrete subgroup $L$ in $\mathbb{R}^n$ is just a lattice $L(B)$. Firstly, we give a detail proof here by making use of the simultaneous Diophantine approximation theory in real number field $\mathbb{R}$ (see \cite{J1} and \cite{J2}).\\

\begin{lem}\label{lm1.1} Let $L\subset \mathbb{R}^n$ be a discrete subgroup, $\alpha_1,\alpha_2,\dots,\alpha_m \in L$ be $m$ vectors of $L$. Then $\alpha_1,\alpha_2,\dots,\alpha_m$ are linearly independent over $\mathbb{R}$, if and only if which are linearly independent over $\mathbb{Z}$.
\end{lem}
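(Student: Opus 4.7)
The ``only if'' direction is immediate and essentially formal: any nontrivial $\mathbb{Z}$-linear relation among $\alpha_1,\dots,\alpha_m$ is in particular a nontrivial $\mathbb{R}$-linear relation, so $\mathbb{R}$-linear independence implies $\mathbb{Z}$-linear independence regardless of discreteness. The content of the lemma, and where the hypothesis on $L$ is used, lies entirely in the ``if'' direction.

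For that direction the plan is to argue by contradiction, invoking simultaneous Diophantine approximation in the form hinted at by the introduction. Suppose $\alpha_1,\dots,\alpha_m$ are $\mathbb{Z}$-linearly independent but $\mathbb{R}$-linearly dependent. After reindexing, I may assume $\alpha_1,\dots,\alpha_{m-1}$ are $\mathbb{R}$-linearly independent while $\alpha_m = \sum_{i=1}^{m-1} c_i \alpha_i$ for real numbers $c_1,\dots,c_{m-1}$ (not all rational, else we could clear denominators to contradict $\mathbb{Z}$-independence). By Dirichlet's theorem on simultaneous approximation, for every integer $N\geq 1$ there exist integers $q\geq 1$ and $p_1,\dots,p_{m-1}$ with $|qc_i - p_i| < 1/N$ for $i=1,\dots,m-1$. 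Set
\begin{equation*}
v_N \;=\; q\alpha_m - \sum_{i=1}^{m-1} p_i\alpha_i \;=\; \sum_{i=1}^{m-1} (qc_i - p_i)\,\alpha_i \in L.
\end{equation*}

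Now I split two cases. If $v_N = 0$ for some $N$, then $q\alpha_m - \sum p_i\alpha_i = 0$ with $q\geq 1$ is a nontrivial $\mathbb{Z}$-linear relation, contradicting the standing hypothesis. Otherwise $v_N\neq 0$ for all large $N$, but the triangle inequality gives $|v_N| \leq (1/N)\sum_{i=1}^{m-1} |\alpha_i|$, which is strictly less than the minimum distance $\lambda$ of $L$ as soon as $N$ is large enough. This contradicts discreteness, completing the proof. The main obstacle I anticipate is purely expository: making the reduction to the ``codimension-one'' dependence clean, and stating Dirichlet's approximation theorem in exactly the form needed so that the vector $v_N$ lies in $L$ and has controllable norm; once those are in place the contradiction drops out in one line.
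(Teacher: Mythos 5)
Your proposal is correct and takes essentially the same route as the paper: both use simultaneous Diophantine approximation (you cite Dirichlet, the paper cites Minkowski's Third Theorem) to convert a real dependence among the $\alpha_i$ into an integer combination lying in $L$ whose norm is below the minimum distance $\lambda$, which must therefore vanish and then contradicts $\mathbb{Z}$-independence. One minor expository slip: after reindexing you cannot always arrange that $\alpha_1,\dots,\alpha_{m-1}$ be $\mathbb{R}$-linearly independent (the real rank may be smaller than $m-1$), but your argument never uses that assumption --- only the representation $\alpha_m=\sum_{i=1}^{m-1}c_i\alpha_i$, which always exists for a dependent set --- so the proof is unaffected.
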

\begin{proof} If $\alpha_1,\alpha_2,\dots,\alpha_m$ are linearly independent over $\mathbb{R}$, trivially which are linearly independent over $\mathbb{Z}$. Suppose that $\alpha_1,\alpha_2,\dots,\alpha_m$ are linearly independent over $\mathbb{Z}$, we consider arbitrary linear combination over $\mathbb{R}$. Let
\begin{equation}
a_1 \alpha_1+a_2 \alpha_2+\cdots+a_m \alpha_m=0,\quad \forall a_i\in \mathbb{R}.
\end{equation}
We should prove (1.4) is equivalent to $a_1=a_2=\cdots=a_m=0$, which implies that $\alpha_1,\alpha_2,\dots,\alpha_m$ are linearly independent over $\mathbb{R}$.

By Minkowski's Third Theorem (see Theorem  VII of \cite{J2}), for any sufficiently large $N>1$, there are a positive integer $q\geqslant 1$ and integers $p_1,p_2,\dots,p_m \in \mathbb{Z}$ such that
\begin{equation}
\max\limits_{1\leqslant i\leqslant m} |qa_i-p_i|<N^{-\frac{1}{m}},\ \text{and}\ 1\leqslant q\leqslant N.
\end{equation}
By (1.4), we have
\begin{equation*}
|p_1 \alpha_1+p_2 \alpha_2+\cdots+p_m \alpha_m|=|(qa_1-p_1)\alpha_1+(qa_2-p_2)\alpha_2+\cdots+(qa_m-p_m)\alpha_m|
\end{equation*}
\begin{equation}
\leqslant mN^{-\frac{1}{m}} \max\limits_{1\leqslant i\leqslant m}|\alpha_i|.\qquad
\end{equation}
Let $\lambda$ be the minimum distance of $L$, $\epsilon>0$ be any positive real number. We select $N$ such that
\begin{equation*}
N>\max \{(\frac{m}{\epsilon})^m,\ (\frac{m}{\lambda})^m \max\limits_{1\leqslant i\leqslant m} |\alpha_i|^m\}.
\end{equation*}
It follows that $mN^{-\frac{1}{m}}<\epsilon$ and
\begin{equation*}
mN^{-\frac{1}{m}} \max\limits_{1\leqslant i\leqslant m} |\alpha_i|<\lambda.
\end{equation*}
By (1.6) we have
\begin{equation*}
|p_1 \alpha_1+p_2 \alpha_2+\cdots+p_m \alpha_m|<\lambda.
\end{equation*}
Since $p_1 \alpha_1+p_2 \alpha_2+\cdots+p_m \alpha_m \in L$, thus we have $p_1 \alpha_1+p_2 \alpha_2+\cdots+p_m \alpha_m=0$, and $p_1=p_2=\cdots=p_m=0$. By (1.5) we have $q|a_i|<\frac{1}{m} \varepsilon$ for all $i$, $1\leqslant i\leqslant m$. Since $\varepsilon$ is sufficiently small positive number, we must have $a_1=a_2=\cdots=a_m=0$. We complete the proof of lemma.

\end{proof}

Suppose that $B\in \mathbb{R}^{n\times m}$ is an $n\times m$-dimensional matrix and rank$(B)=m$, $B'$ is the transpose of $B$. It is easy to verify
\begin{equation*}
\text{rank}(B'B)=\text{rank}(B)=m\Rightarrow\ \text{det}(B'B)\neq 0,
\end{equation*}
which implies that $B'B$ is an invertible square matrix of $m\times m$ dimension. Since $B'B$ is a positive defined symmetric matrix, then there is an orthogonal matrix $P\in \mathbb{R}^{m\times m}$ such that
\begin{equation}
P'B'BP=\text{diag}\{\delta_1,\delta_2,\dots,\delta_m\},
\end{equation}
where $\delta_i>0$ are the characteristic value of $B'B$, and diag$\{\delta_1,\delta_2,\dots,\delta_m\}$ is the diagonal matrix of $m\times m$ dimension.\\

\begin{lem}\label{lm1.2} Suppose that $B\in \mathbb{R}^{n\times m}$ with rank$(B)=m$, $\delta_1,\delta_2,\dots,\delta_m$ are $m$ characteristic values of $B'B$, and $\lambda(L(B))$ is the minimum distance of lattice $L(B)$, then we have
\begin{equation}
\lambda(L(B))=\min\limits_{x\in \mathbb{Z}^m,\ x\neq 0} |Bx|\geqslant \sqrt{\delta},
\end{equation}
where $\delta=\min\{\delta_1,\delta_2,\dots,\delta_m\}$.
\end{lem}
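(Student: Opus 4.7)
The plan is to reduce the bound to a statement about the quadratic form $x'B'Bx$ and then diagonalise it using the orthogonal matrix $P$ supplied by (1.7). The equality $\lambda(L(B)) = \min_{x\in\mathbb{Z}^m,\,x\neq 0}|Bx|$ is essentially the definition (1.3) combined with the fact that the columns of $B$ are linearly independent (rank $m$), so that $Bx=0$ forces $x=0$; this lets me translate the minimisation over lattice vectors into a minimisation over nonzero integer vectors $x\in\mathbb{Z}^m$.

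For the nontrivial inequality, I would start from $|Bx|^2 = x'B'Bx$ and apply (1.7). Setting $y = P'x$, which is a bijection on $\mathbb{R}^m$ since $P$ is orthogonal, gives
\begin{equation*}
|Bx|^2 = x'B'Bx = y'\,\mathrm{diag}\{\delta_1,\delta_2,\dots,\delta_m\}\,y = \sum_{i=1}^m \delta_i y_i^2 \geqslant \delta \sum_{i=1}^m y_i^2 = \delta |y|^2,
\end{equation*}
and orthogonality of $P$ also yields $|y|^2 = |P'x|^2 = x'PP'x = |x|^2$. Hence $|Bx|^2 \geqslant \delta |x|^2$ for every $x\in\mathbb{R}^m$.

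Finally I would invoke the hypothesis $x\in\mathbb{Z}^m$ with $x\neq 0$: every nonzero integer vector satisfies $|x|^2 \geqslant 1$, so $|Bx|^2 \geqslant \delta$ and therefore $|Bx|\geqslant \sqrt{\delta}$. Taking the minimum over all such $x$ gives $\lambda(L(B))\geqslant \sqrt{\delta}$.

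I do not anticipate a real obstacle here; the argument is just the standard Rayleigh-quotient lower bound for a positive definite quadratic form, specialised to integer inputs. The only point requiring a little care is to confirm that the assumption $\mathrm{rank}(B)=m$ ensures $\delta>0$ (so that $\sqrt{\delta}$ is meaningful), which follows because $B'B$ is positive definite — noted already in the remarks preceding the lemma via $\det(B'B)\neq 0$ together with positive semi-definiteness.
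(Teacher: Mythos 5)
Your proof is correct and follows essentially the same route as the paper: diagonalise $B'B$ via the orthogonal matrix $P$ from (1.7), bound the quadratic form below by $\delta|x|^2$, and use $|x|^2\geqslant 1$ for nonzero integer vectors. The only difference is cosmetic — you make explicit the positivity of $\delta$ and the substitution $y=P'x$, which the paper leaves implicit.
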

\begin{proof} Let $A=B'B$, by (1.7), there exists an orthogonal matrix $P\in \mathbb{R}^{m\times m}$ such that
\begin{equation*}
P'AP=\text{diag}\{\delta_1,\delta_2,\dots,\delta_m\}.
\end{equation*}
If $x\in \mathbb{Z}^m$, $x\neq 0$, we have
\begin{equation*}
|Bx|^2=x'Ax=x'P(P'AP)P'x
\end{equation*}
\begin{equation*}
\qquad\qquad\quad\ \ =(P'x)'\ \text{diag}\{\delta_1,\delta_2,\dots,\delta_m\}P'x
\end{equation*}
\begin{equation*}
\geqslant \delta |P'x|^2=\delta |x|^2.\ \
\end{equation*}
Since $x\in \mathbb{Z}^m$ and $x\neq 0$, we have $|x|^2\geqslant 1$, it follows that
\begin{equation*}
\min\limits_{x\in \mathbb{Z}^m,\ x\neq 0} |Bx|\geqslant \sqrt{\delta} |x| \geqslant \sqrt{\delta}.
\end{equation*}
We have lemma \ref{lm1.2} immediately.

\end{proof}

Another application of Lemma \ref{lm1.2} is to give a countable upper bound for smoothing parameter (see Theorem \ref{th5} below).
Combining lemma \ref{lm1.1} and lemma \ref{lm1.2}, we show that the following assertion.\\

\begin{thm}\label{th1} Let $L\subset \mathbb{R}^n$ be a subset, then $L$ is a discrete subgroup if and only if there is an $n\times m$ dimensional matrix $B\in \mathbb{R}^{n\times m}$ with rank$(B)=m$ such that
\begin{equation}
L=L(B)=\{Bx\ |\ x\in \mathbb{Z}^m\}.
\end{equation}
\end{thm}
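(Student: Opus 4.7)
The plan is to prove both directions. The \emph{if} direction is immediate from Lemma \ref{lm1.2}: given $B\in\mathbb{R}^{n\times m}$ of full column rank, $L(B)$ is visibly an additive subgroup of $\mathbb{R}^n$, and Lemma \ref{lm1.2} yields $|Bx|\ge\sqrt{\delta}>0$ for every nonzero $x\in\mathbb{Z}^m$, which is exactly the discreteness condition (1.1) with $\lambda=\sqrt{\delta}$.

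For the converse I would argue by induction on the rank $m$ of $L$, defined as the maximum number of $\mathbb{R}$-linearly independent vectors in $L$; by Lemma \ref{lm1.1} this also equals the maximum number of $\mathbb{Z}$-linearly independent vectors, and $1\le m\le n$ since $L$ is non-trivial and sits in $\mathbb{R}^n$. In the base case $m=1$, $L$ is contained in a single real line; taking $\beta_1$ to be a shortest nonzero vector (which exists by (1.2)), any $\gamma=t\beta_1\in L$ satisfies $|\gamma-\lfloor t\rfloor\beta_1|<|\beta_1|$, so minimality forces $t\in\mathbb{Z}$ and hence $L=\mathbb{Z}\beta_1$.

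For the inductive step, fix $\mathbb{R}$-linearly independent $\alpha_1,\ldots,\alpha_m\in L$ and set $V_{m-1}=\mathrm{span}_{\mathbb{R}}(\alpha_1,\ldots,\alpha_{m-1})$, $V=V_{m-1}+\mathbb{R}\alpha_m$; maximality of $m$ forces $L\subset V$, and $L\cap V_{m-1}$ is a discrete subgroup of rank $m-1$, so by the induction hypothesis $L\cap V_{m-1}=\mathbb{Z}\beta_1+\cdots+\mathbb{Z}\beta_{m-1}$ for some $\beta_i\in L$. Writing each $\gamma\in L$ in the basis $(\beta_1,\ldots,\beta_{m-1},\alpha_m)$ of $V$ as $\gamma=\sum_{i<m}t_i\beta_i+s\alpha_m$, I would consider the set $T=\{\gamma\in L:0\le t_i<1,\ 0<s\le 1\}$, which is finite (bounded region, discrete $L$) and nonempty (it contains $\alpha_m$), and pick $\beta_m\in T$ with minimal $s$-coordinate $s_m$. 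The equality $L=\mathbb{Z}\beta_1+\cdots+\mathbb{Z}\beta_m$ then follows by a two-stage Euclidean division: for an arbitrary $\gamma\in L$, first subtract an integer multiple of $\beta_m$ to push the $s$-coordinate into $[0,s_m)$, then reduce each $t_i$ modulo $1$; a positive residual $s$ would place the resulting element in $T$ with $s$-coordinate strictly below $s_m$, contradicting minimality, so the residual is $0$ and what remains lies in $L\cap V_{m-1}$. The $\beta_i$'s are automatically $\mathbb{R}$-independent (since $\beta_m$ contributes a nonzero $\alpha_m$-component $s_m$), so $B=[\beta_1,\ldots,\beta_m]$ has rank $m$ and $L=L(B)$.

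The main obstacle is the inductive step: one must carefully verify that $L\cap V_{m-1}$ has rank exactly $m-1$ (which uses Lemma \ref{lm1.1} together with maximality of $m$, ruling out the possibility that some vector outside $V_{m-1}$ but in $V$ shrinks the effective rank), and that $T$ is genuinely finite so that the minimal-$s$ element $\beta_m$ exists. Once these points are settled, the Euclidean-division chase and the rank count for $B$ are mechanical.
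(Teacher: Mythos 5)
Your proof is correct, and for the hard direction it takes a genuinely different and more self-contained route than the paper. The paper's proof of ``discrete subgroup $\Rightarrow$ lattice'' begins by simply asserting that $L$ is a free $\mathbb{Z}$-module, invokes Lemma \ref{lm1.1} to conclude $\mathrm{rank}_{\mathbb{Z}}(L)=m\leqslant n$, and then takes a $\mathbb{Z}$-basis as the columns of $B$ (again using Lemma \ref{lm1.1} to see those columns are $\mathbb{R}$-independent, so $\mathrm{rank}(B)=m$). The assertion that $L$ is free of finite rank is exactly the non-trivial content, and the paper does not justify it (torsion-freeness alone does not give freeness without finite generation). Your induction on the $\mathbb{R}$-rank --- restricting to $L\cap V_{m-1}$, applying the inductive hypothesis, and adjoining a vector $\beta_m$ of minimal positive $\alpha_m$-coordinate found in the finite set $T$, followed by the two-stage Euclidean reduction --- is the classical constructive argument that actually establishes this freeness, so your proof fills the gap the paper leaves. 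The trade-off is length: the paper's route is shorter because it outsources everything to Lemma \ref{lm1.1} (proved via simultaneous Diophantine approximation) plus an unproved structural claim, while yours is longer but complete and only needs the elementary fact that a discrete subgroup meets any bounded region in finitely many points. Your treatment of the ``if'' direction via Lemma \ref{lm1.2} coincides with the paper's. The only points worth making fully explicit in a final write-up are the ones you already flag: that $L\subset V$ follows from maximality of $m$ (otherwise one finds $m+1$ independent vectors), and that subtracting integer multiples of $\beta_1,\dots,\beta_{m-1}$ does not disturb the $\alpha_m$-coordinate, so the residual element genuinely lands in $T$ or in $L\cap V_{m-1}$.
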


\begin{proof} If $L \subset \mathbb{R}^n$ is a discrete subgroup, then $L$ is a free $\mathbb{Z}$-module. By lemma \ref{lm1.1}, we have $\text{rank}_{\mathbb{Z}}(L)=m\leqslant n$. Let $\beta_1,\beta_2,\dots,\beta_m$ be a $\mathbb{Z}$-basis of $L$, then
\begin{equation*}
L=\{\sum_{i=1}^{m} a_i \beta_i\ |\ a_i\in \mathbb{Z}\}.
\end{equation*}
Writing $B=[\beta_1,\beta_2,\dots,\beta_m]_{n\times m}$, then the rank of matrix $B$ is $m$, and
\begin{equation*}
L=\{Bx\ |\ x\in \mathbb{Z}^m\}=L(B).
\end{equation*}
Conversely, let $L(B)$ be arbitrary lattice generated by $B$, obviously, $L(B)$ is an additive subgroup of $\mathbb{R}^n$, by lemma \ref{lm1.2}, $L(B)$ is also a discrete subgroup, we have Theorem \ref{th1} at once.

\end{proof}

\begin{cor}\label{co1.1} Let $L\subset \mathbb{R}^n$ be a lattice and $G\subset L$ be an additive subgroup of $L$, then $G$ is a lattice of $\mathbb{R}^n$.
\end{cor}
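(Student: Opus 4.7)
The plan is to deduce the corollary as an immediate consequence of Theorem \ref{th1}, since Theorem \ref{th1} characterizes lattices in $\mathbb{R}^n$ as precisely the discrete subgroups. So my strategy is to verify that $G$ inherits the discreteness property from $L$, and then invoke the theorem.

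First I would note that by hypothesis $G$ is an additive subgroup of $\mathbb{R}^n$ (since it is an additive subgroup of $L$, which itself sits inside $\mathbb{R}^n$). Assuming $G$ is non-trivial (the trivial case $G = \{0\}$ being vacuous or excluded by the convention in Definition 1.1), I would then use the fact that $L$ is a lattice, hence by Theorem \ref{th1} a discrete subgroup, so there exists $\lambda > 0$ with $|x| \geqslant \lambda$ for every nonzero $x \in L$. Since $G \subset L$, every nonzero $x \in G$ also satisfies $|x| \geqslant \lambda$, which is exactly the definition of a discrete subgroup of $\mathbb{R}^n$.

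Once discreteness of $G$ is established, I would apply the converse direction of Theorem \ref{th1}: any discrete subgroup of $\mathbb{R}^n$ arises as $L(B')$ for some matrix $B' \in \mathbb{R}^{n \times k}$ with $\mathrm{rank}(B') = k \leqslant n$. In particular $G = L(B')$ is a lattice of $\mathbb{R}^n$.

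There is no real obstacle here; the only point requiring care is verifying that the minimum distance of $L$ bounds the norms of all nonzero elements of the subgroup $G$ from below, which is immediate from the containment $G \subset L$. The corollary is therefore essentially a restatement of Theorem \ref{th1} combined with the obvious monotonicity of the minimum-distance bound under subgroup inclusion.
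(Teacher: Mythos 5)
Your proposal is correct and is exactly the argument the paper intends: the corollary is stated without proof as an immediate consequence of Theorem \ref{th1}, since $G\subset L$ inherits the minimum-distance bound $\lambda$ of $L$ and is therefore a discrete subgroup, hence a lattice. Your explicit verification of the discreteness of $G$ fills in precisely the step the paper leaves implicit.
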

\begin{cor}\label{co1.2} Let $L\subset \mathbb{Z}^n$ be an additive subgroup, then $L$ is a lattice of $\mathbb{R}^n$. These lattices are called integer lattices.
\end{cor}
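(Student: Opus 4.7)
The plan is to reduce the statement directly to Theorem \ref{th1} by verifying that any additive subgroup $L \subset \mathbb{Z}^n$ satisfies the discreteness condition of Definition 1.1. The key observation is uniform: the standard integer lattice $\mathbb{Z}^n$ is itself discrete with a lower bound on nonzero norms that depends only on the ambient structure, not on $L$.

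First I would show that $\mathbb{Z}^n$ is a discrete subgroup of $\mathbb{R}^n$. Indeed, for any $x = (x_1,\dots,x_n)' \in \mathbb{Z}^n$ with $x \neq 0$, at least one coordinate $x_i$ is a nonzero integer, so $|x|^2 = \sum_{i=1}^n x_i^2 \geqslant 1$. Hence $\min_{x \in \mathbb{Z}^n,\, x \neq 0} |x| \geqslant 1 > 0$, confirming the condition (1.1) with $\lambda = 1$.

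Next, since $L$ is an additive subgroup of $\mathbb{Z}^n$, every nonzero vector of $L$ is also a nonzero vector of $\mathbb{Z}^n$, and therefore satisfies $|x| \geqslant 1$. Thus $L$ inherits the discreteness bound
\[
\min\limits_{x\in L,\, x\neq 0} |x| \geqslant 1 > 0,
\]
so $L$ is itself a discrete subgroup of $\mathbb{R}^n$. Applying Theorem \ref{th1}, there exists an $n \times m$ matrix $B$ of rank $m$ such that $L = L(B) = \{Bx \mid x \in \mathbb{Z}^m\}$, which is the definition of a lattice. (Equivalently, one could invoke Corollary \ref{co1.1} directly, since $\mathbb{Z}^n$ is already a lattice and $L$ is an additive subgroup of it.)

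There is no real obstacle here; the proof is essentially a one-line application of Theorem \ref{th1} once the discreteness of $\mathbb{Z}^n$ is noted. The only thing to be careful about is that the basis $B$ produced by Theorem \ref{th1} need not have integer entries a priori from the abstract statement, but since a $\mathbb{Z}$-basis of $L \subset \mathbb{Z}^n$ consists of integer vectors, $B$ will automatically lie in $\mathbb{Z}^{n \times m}$, justifying the terminology \emph{integer lattice}.
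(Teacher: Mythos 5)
Your proof is correct and matches the paper's (implicit) reasoning: the paper states this corollary without a separate proof precisely because, as you observe, any nonzero integer vector has norm at least $1$, so every additive subgroup of $\mathbb{Z}^n$ is discrete and Theorem \ref{th1} (or Corollary \ref{co1.1} applied to the lattice $\mathbb{Z}^n$) applies immediately. Your closing remark that the resulting basis matrix automatically has integer entries is a correct and useful justification of the term \emph{integer lattice}.
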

According to above Theorem \ref{th1}, a lattice $L(B)$ is equivalent to a discrete subgroup of $\mathbb{R}^n$. Suppose $L=L(B)$ is a lattice with generated matrix $B\in \mathbb{R}^{n\times m}$, and rank$(B)=m$, we write rank$(L)=$rank$(B)$, and
\begin{equation}
d(L)=\sqrt{\text{det}(B'B)}.
\end{equation}
In particular, if rank$(L)=n$ is a full-rank lattice, then $d(L)=|\text{det}(B)|$ as usual. A sublattice $N$ of $L$ means a discrete additive subgroup of $L$, the quotient group is written by $L/N$ and the cardinality of $L/N$ is denoted by $|L/N|$.\\

 \begin{lem}\label{lm1.3}Let $L\subset \mathbb{R}^n$ be a lattice and $N\subset L$ be a sublattice. If rank$(N)=$rank$(L)$, then the quotient group $L/N$ is a finite group.
\end{lem}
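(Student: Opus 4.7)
The plan is to reduce $L/N$ to a quotient of $\mathbb{Z}^m$ by a sublattice of maximal rank, where $m = \text{rank}(L) = \text{rank}(N)$. By Theorem \ref{th1}, I can write $L = L(B)$ and $N = L(C)$ for matrices $B, C \in \mathbb{R}^{n \times m}$, each of rank $m$. Since every column $\gamma_j$ of $C$ lies in $L$, it is an integer combination of the columns of $B$; collecting coefficients gives a (unique, by injectivity of $B$ on $\mathbb{R}^m$) integer matrix $U \in \mathbb{Z}^{m \times m}$ with $C = BU$. Because $C$ has rank $m$ and $B$ has full column rank, $U$ itself must have rank $m$, hence $\det(U) \neq 0$.

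Next I would exhibit an explicit surjective group homomorphism
\begin{equation*}
\varphi : \mathbb{Z}^m \longrightarrow L/N, \qquad \varphi(x) = Bx + N.
\end{equation*}
Surjectivity is immediate from $L = B\mathbb{Z}^m$. To compute the kernel, note that $Bx \in N$ means $Bx = Cy = BUy$ for some $y \in \mathbb{Z}^m$; the injectivity of $B$ on $\mathbb{R}^m$ upgrades this to $x = Uy$. Therefore $\ker\varphi = U\mathbb{Z}^m$, and the first isomorphism theorem gives $L/N \cong \mathbb{Z}^m / U\mathbb{Z}^m$.

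The final step is to show $\mathbb{Z}^m / U\mathbb{Z}^m$ is finite, using $\det(U)\neq 0$. I would invoke the Smith normal form to write $U = PDQ$ with $P, Q \in GL_m(\mathbb{Z})$ and $D = \text{diag}(d_1,\dots,d_m)$ with $d_i \in \mathbb{Z}\setminus\{0\}$; since multiplying by unimodular matrices is a bijection on $\mathbb{Z}^m$, one gets $\mathbb{Z}^m/U\mathbb{Z}^m \cong \bigoplus_{i=1}^m \mathbb{Z}/d_i\mathbb{Z}$, a finite group of order $\prod_i |d_i| = |\det(U)|$.

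There is no real obstacle; the only point requiring care is the bookkeeping that produces $U$ with integer entries and nonzero determinant, which is where the two hypotheses (namely, $N \subset L$ and $\text{rank}(N) = \text{rank}(L)$) each play exactly one role: inclusion gives integrality of $U$, and equality of ranks gives invertibility of $U$ over $\mathbb{Q}$. Everything else is standard linear algebra over $\mathbb{Z}$.
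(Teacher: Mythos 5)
Your proof is correct and takes essentially the same route as the paper: both transport the problem along the isomorphism $Bx \mapsto x$ of $L$ with $\mathbb{Z}^m$, reducing $L/N$ to a quotient $\mathbb{Z}^m/U\mathbb{Z}^m$ by a full-rank integer sublattice. The only difference is that the paper cites the index formula $|\mathbb{Z}^m/\sigma(N)|=d(\sigma(N))$ as a well-known result, whereas you supply its proof via the Smith normal form.
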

\begin{proof}Let rank$(L)=m$, and $L=L(B)$, where $B\in \mathbb{R}^{n\times m}$ with rank$(B)=m$. We define a mapping $\sigma$ from $L$ to $\mathbb{Z}^m$ by $\sigma(Bx)=x$. Clearly, $\sigma$ is an additive group isomorphism, $\sigma(N)\subset \mathbb{Z}^m$ is a full-rank lattice of $\mathbb{Z}^m$, and $L/N \cong \mathbb{Z}^m/\sigma(N)$. It is a well-known result that
\begin{equation*}
|\mathbb{Z}^m/\sigma(N)|=d(\sigma(N)).
\end{equation*}
It follows that
\begin{equation*}
|L/N|=|\mathbb{Z}^m/\sigma(N)|=d(\sigma(N)).
\end{equation*}
Lemma \ref{lm1.3} follows.

\end{proof}

Suppose that $L_1\subset \mathbb{R}^n$, $L_2\subset \mathbb{R}^n$ are two lattices of $\mathbb{R}^n$, we define $L_1+L_2=\{a+b|a\in L_1,b\in L_2\}$. Obviously, $L_1+L_2$ is an additive subgroup of $\mathbb{R}^n$, but generally speaking, $L_1+L_2$ is not a lattice of $\mathbb{R}^n$ again.\\

\begin{lem}\label{lm1.4}  Let $L_1\subset \mathbb{R}^n$, $L_2\subset \mathbb{R}^n$ be two lattices of $\mathbb{R}^n$. If rank$(L_1 \cap L_2)=$rank$(L_1)$ or rank$(L_1 \cap L_2)=$rank$(L_2)$, then $L_1+L_2$ is again a lattice of $\mathbb{R}^n$.
\end{lem}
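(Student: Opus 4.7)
The plan is to reduce the statement to the previous lemma by analyzing the quotient $(L_1+L_2)/L_1$, and then to conclude discreteness from a finite cosets decomposition.

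First I would set up notation: without loss of generality assume $\text{rank}(L_1 \cap L_2) = \text{rank}(L_2)$ (the other case is symmetric). Since $L_1 \cap L_2$ is an additive subgroup of $L_1$, Corollary \ref{co1.1} tells us it is itself a lattice, and by hypothesis it is a sublattice of $L_2$ of the same rank. Lemma \ref{lm1.3} therefore applies and gives us that $L_2/(L_1 \cap L_2)$ is a finite group, say of order $k$.

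Next I would invoke the classical second isomorphism theorem for abelian groups applied to the inclusion $L_2 \hookrightarrow L_1 + L_2$: the natural map $L_2 \to (L_1+L_2)/L_1$, $b \mapsto b + L_1$, is surjective with kernel exactly $L_1 \cap L_2$, yielding
\begin{equation*}
(L_1+L_2)/L_1 \;\cong\; L_2/(L_1 \cap L_2),
\end{equation*}
which is finite of order $k$. Choose coset representatives $r_1 = 0, r_2, \dots, r_k$ so that
\begin{equation*}
L_1 + L_2 \;=\; \bigsqcup_{i=1}^{k} \bigl( r_i + L_1 \bigr).
\end{equation*}

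Finally, to show $L_1+L_2$ is a lattice, by Theorem \ref{th1} it suffices to prove $L_1+L_2$ is a discrete subgroup of $\mathbb{R}^n$, i.e.\ that there is a uniform positive lower bound on the norms of its nonzero elements. For each $i \geq 2$, the set $r_i + L_1$ misses the origin (since $r_i \notin L_1$), and because $L_1$ is discrete only finitely many of its vectors lie in the ball $b(-r_i,|r_i|+1)$; therefore the quantity $d_i = \min_{v \in L_1}|r_i + v|$ is attained and strictly positive. Writing $\lambda_1$ for the minimum distance of $L_1$, any nonzero $x \in L_1+L_2$ satisfies $|x| \geq \min\{\lambda_1, d_2, \dots, d_k\} > 0$. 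Hence $L_1+L_2$ is discrete, and Theorem \ref{th1} finishes the proof.

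The only mildly subtle point is the last step, ensuring that each translated coset $r_i + L_1$ has a well-defined positive distance from the origin; this is where the discreteness of $L_1$ (rather than mere subgroup structure) is used in an essential way. Everything else is a routine application of Lemma \ref{lm1.3} and a standard isomorphism theorem.
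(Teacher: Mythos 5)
Your proposal is correct and follows essentially the same route as the paper's own proof: both reduce to the finiteness of a quotient via the second isomorphism theorem and Lemma \ref{lm1.3}, then split $L_1+L_2$ into finitely many cosets of one of the lattices and use that lattice's discreteness to bound each coset away from the origin. The only cosmetic difference is that you quotient by $L_1$ and measure each coset's distance to $0$, whereas the paper quotients by $L_2$ and works with a distance function $\rho(x)$ from points of $L_1$ to $L_2$; the underlying argument is identical.
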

\begin{proof} To prove $L_1+L_2$ is a lattice of $\mathbb{R}^n$, by Theorem \ref{th1}, it is sufficient to prove $L_1+L_2$ is a discrete subgroup of $\mathbb{R}^n$. Suppose that rank$(L_1 \cap L_2)=$rank$(L_1)$, for any $x\in L_1$, we define a distance function $\rho(x)$ by
\begin{equation*}
\rho(x)=\inf \{|x-y|\ \Big|\ y\neq x,\ y\in L_2\}.
\end{equation*}
Since there are only finitely many vectors in $L_2\cap b(x,\delta)$, where $b(x,\delta)$ is any a ball of center $x$ with radius $\delta$. Therefore, we have
\begin{equation}
\rho(x)=\min \{|x-y|\ \Big|\ y\neq x,\ y\in L_2\}=\lambda_x>0.
\end{equation}
On the other hand, if $x_1\in L_1$, $x_2\in L_1$ and $x_1-x_2\in L_2$, then there is $y_0\in L_2$ such that $x_1=x_2+y_0$, and we have $\rho(x_1)=\rho(x_2)$. It means that $\rho(x)$ is defined over the quotient group $L_1+L_2/L_2$. Because we have the following group isomorphic theorem
\begin{equation*}
L_1+L_2/L_2\cong L_1/L_1\cap L_2.
\end{equation*}
By lemma \ref{lm1.3}, it follows that
\begin{equation*}
|L_1+L_2/L_2|=|L_1/L_1\cap L_2|<\infty.
\end{equation*}
In other words, $L_1+L_2/L_2$ is also a finite group. Let $x_1,x_2,\dots,x_k$ be the representative elements of $L_1+L_2/L_2$, we have
\begin{equation*}
\min\limits_{x\in L_1,y\in L_2,x\neq y} |x-y|=\min\limits_{1\leqslant i\leqslant k} \rho(x_i)\geqslant \min\{\lambda_{x_1},\lambda_{x_2},\dots,\lambda_{x_k}\}>0.
\end{equation*}
Therefore, $L_1+L_2$ is a discrete subgroup of $\mathbb{R}^n$, thus it is a lattice of $\mathbb{R}^n$ by Theorem \ref{th1}.

\end{proof}

\begin{rem} \label{rm1} The condition rank$(L_1 \cap L_2)=$rank$(L_1)$ or rank$(L_1 \cap L_2)=$rank$(L_2)$ in lemma \ref{lm1.4} seems to be necessary. As a counterexample, we see the real line $\mathbb{R}$, let $L_1=\mathbb{Z}$ and $L_2=\sqrt{2}\mathbb{Z}$, then $L_1+L_2$ is not a discrete subgroup of $\mathbb{R}$, thus $L_1+L_2$ is not a lattice in $\mathbb{R}$. Because $L_1+L_2=\{n+\sqrt{2}m\big| n\in \mathbb{Z},m\in \mathbb{Z}\}$ is dense in $\mathbb{R}$ by Dirichlet's Theorem (see Theorem I of [5]).
\end{rem}
As a direct consequence, we have the following generalized form of lemma \ref{lm1.4}.\\

\begin{cor} \label{co1.3}  Let $L_1,L_2,\dots,L_m$ be $m$ lattices of $\mathbb{R}^n$ and
\begin{equation*}
\text{rank}(L_{1}\cap L_{2}\cap\cdots\cap L_{m})=\text{rank}(L_{j})\ \text{for some}\ 1\leqslant j\leqslant m.
\end{equation*}
Then $L_1+L_2+\cdots+L_m$ is a lattice of $\mathbb{R}^n$.
\end{cor}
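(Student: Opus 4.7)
I propose induction on $m$, with the base case $m=2$ being exactly Lemma~\ref{lm1.4}. For the inductive step, I would relabel so that the distinguished index from the hypothesis is $j=1$; hence $K:=L_1\cap L_2\cap\cdots\cap L_m$ satisfies $\mathrm{rank}(K)=\mathrm{rank}(L_1)$.

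The hypothesis descends to any sub-family still containing $L_1$: if one omits $L_m$, the intersection $L_1\cap\cdots\cap L_{m-1}$ still contains $K$ and is still contained in $L_1$, so its rank is squeezed to be exactly $\mathrm{rank}(L_1)$. By the inductive hypothesis applied to $L_1,\ldots,L_{m-1}$ with the same distinguished index $j=1$, the sum $N:=L_1+L_2+\cdots+L_{m-1}$ is a lattice of $\mathbb{R}^n$, and the task reduces to showing $N+L_m$ is again a lattice.

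I would then try to invoke Lemma~\ref{lm1.4} on the pair $(N,L_m)$. Because $L_1\subseteq N$ and $K\subseteq L_1\cap L_m\subseteq N\cap L_m$, one obtains $\mathrm{rank}(N\cap L_m)\geqslant\mathrm{rank}(L_1)$, which is the natural first step toward the rank-matching condition $\mathrm{rank}(N\cap L_m)=\mathrm{rank}(N)$ or $\mathrm{rank}(L_m)$ that Lemma~\ref{lm1.4} demands.

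The main obstacle will be promoting this lower bound to the actual equality needed by Lemma~\ref{lm1.4}; the hypothesis on the total intersection only forces $\mathrm{rank}(N\cap L_m)\geqslant\mathrm{rank}(L_1)$, and matching $\mathrm{rank}(N)$ or $\mathrm{rank}(L_m)$ is not automatic. If the direct approach fails, the fallback is to rerun the distance-function argument from the proof of Lemma~\ref{lm1.4} on $(N,L_m)$ itself: define $\rho(x)=\min\{|x-y|:y\in L_m,\ y\neq x\}$ for $x\in N$, note that $\rho$ descends to $(N+L_m)/L_m\cong N/(N\cap L_m)$, and extract a uniform positive lower bound on $\rho$ by reducing to finitely many coset representatives, exploiting that $K$ is a full-rank sublattice of $L_1\subseteq N$. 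This finiteness step, together with Theorem~\ref{th1}, would give discreteness of $N+L_m$, and it is where I expect the real work of the proof to concentrate.
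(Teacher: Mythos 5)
The obstacle you single out in your last paragraph is not a technical inconvenience to be engineered around: it is fatal, because Corollary~\ref{co1.3} as stated is false for $m\geqslant 3$. Take $n=2$ and $L_1=\mathbb{Z}e_1$, $L_2=\mathbb{Z}e_1+\mathbb{Z}e_2$, $L_3=\mathbb{Z}e_1+\mathbb{Z}\sqrt{2}e_2$. Each is a lattice, and a vector lying in both $L_2$ and $L_3$ has second coordinate in $\mathbb{Z}\cap\sqrt{2}\mathbb{Z}=\{0\}$, so $L_1\cap L_2\cap L_3=\mathbb{Z}e_1$ has rank $1=\mathrm{rank}(L_1)$; the hypothesis holds with $j=1$. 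Yet $L_1+L_2+L_3$ contains $\mathbb{Z}e_2+\sqrt{2}\mathbb{Z}e_2$, which is dense in the line $\mathbb{R}e_2$ (exactly as in Remark~\ref{rm1}), so the sum is not discrete. In this example your induction runs exactly as you describe up to the last step: $N=L_1+L_2=\mathbb{Z}^2$ is a lattice, but $\mathrm{rank}(N\cap L_3)=1$ while $\mathrm{rank}(N)=\mathrm{rank}(L_3)=2$, so neither equality required by Lemma~\ref{lm1.4} can be established, and your fallback of rerunning the distance-function argument on $(N,L_3)$ dies at the same spot, since $N/(N\cap L_3)$ is infinite and no uniform positive lower bound on $\rho$ exists. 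The lower bound $\mathrm{rank}(N\cap L_m)\geqslant\mathrm{rank}(L_1)$ is genuinely all the hypothesis gives you.

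For comparison, the paper's own proof takes the opposite normalization: it relabels so that $j=m$, sets $L'=L_1+\cdots+L_{m-1}$, and uses the squeeze $\bigcap_i L_i\subseteq L'\cap L_m\subseteq L_m$ to get $\mathrm{rank}(L'\cap L_m)=\mathrm{rank}(L_m)$, which is precisely the rank condition of Lemma~\ref{lm1.4} for the pair $(L',L_m)$. But it never verifies that $L'$ is a lattice (Lemma~\ref{lm1.4} is stated for two lattices), and under that normalization the hypothesis does not descend to the subfamily $L_1,\dots,L_{m-1}$, so no induction is available; in the example above the corresponding $L'$ is already dense in a line. Your proposal and the paper's proof thus have complementary gaps, and the counterexample defeats both. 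The statement becomes true --- and either argument then goes through --- under the stronger hypothesis that $\mathrm{rank}(L_1\cap\cdots\cap L_m)=\mathrm{rank}(L_j)$ for \emph{every} $j$: then all $L_j$ share a common rank $r$, $K=\bigcap_i L_i$ is a rank-$r$ sublattice of each $L_j$, each quotient $L_j/K$ is finite of some order $d_j$ by Lemma~\ref{lm1.3}, hence $L_j\subseteq\frac{1}{d_j}K$ and the whole sum sits inside the lattice $\frac{1}{d_1\cdots d_m}K$, so it is a lattice by Corollary~\ref{co1.1}. That stronger hypothesis is what actually holds in the paper's application (Corollary~\ref{co3.3}), where every summand is full-rank.
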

\begin{proof}  Without loss of generality, we assume that
\begin{equation*}
\text{rank}(L_{1}\cap L_{2}\cap\cdots\cap L_{m})=\text{rank}(L_m).
\end{equation*}
Let $L_1+L_2+\cdots+L_{m-1}=L'$, then
\begin{equation*}
L'+L_m/L'\cong L_m/L'\cap L_m.
\end{equation*}
Since rank$(L'\cap L_m)=$rank$(L_m)$, by lemma \ref{lm1.4}, we have $L'+L_m=L_1+L_2+\cdots+L_m$ is a lattice of $\mathbb{R}^n$ and the corollary follows.

\end{proof}

\section{Ideal Matrices}

Let $\mathbb{R}[x]$ and $\mathbb{Z}[x]$ be the polynomials rings over $\mathbb{R}$ and $\mathbb{Z}$ with variable $x$ respectively. Suppose that
\begin{equation}
\phi(x)=x^n-\phi_{n-1}x^{n-1}-\cdots-\phi_1x-\phi_0\in \mathbb{Z}[x],\ \phi_0\neq 0,
\end{equation}
is a polynomial with integer coefficients of which has no multiple roots in complex numbers field $\mathbb{C}$. Let $w_1,w_2,\dots,w_n$ be the $n$ different roots of $\phi(x)$ in $\mathbb{C}$, the Vandermonde matrix $V_{\phi}$ is defined by
\begin{equation}
V_{\phi}=
\begin{pmatrix}
1 & 1 & \cdots & 1 \\
w_1 & w_2 & \cdots & w_n \\
\vdots & \vdots &  & \vdots \\
w_1^{n-1} & w_2^{n-1} & \cdots & w_n^{n-1}
\end{pmatrix},\quad
\text{\ and\ \ \ det}(V_{\phi})\neq 0.
\end{equation}
According to the given polynomial $\phi(x)$, we define a rotation matrix $H=H_{\phi}$ by
\begin{equation}
H=H_{\phi}=
\left(
\begin{array}{ccc|c}
0 & \cdots & 0 & \phi_0\\
\hline
& & & \phi_1\\
& I_{n-1} & & \vdots \\
& & & \phi_{n-1} \\
\end{array}
\right)_{n\times n}\in \mathbb{Z}^{n\times n},
\end{equation}
where $I_{n-1}$ is the $(n-1)\times (n-1)$ unit matrix. Obviously, the characteristic polynomial of $H$ is just $\phi(x)$.

We use column notation for vectors in $\mathbb{R}^n$, for any $f=\begin{pmatrix}
     f_0 \\
     f_1 \\
     \vdots \\
     f_{n-1}
\end{pmatrix}\in \mathbb{R}^n$, the ideal matrix generated by vector $f$ is defined by
\begin{equation}
H^*(f)=[f,Hf,H^2 f,\dots,H^{n-1}f]_{n\times n}\in \mathbb{R}^{n\times n},
\end{equation}
which is a block matrix in terms of each column $H^k f\ (0\leqslant k\leqslant n-1)$. Sometimes, $f$ is called an input vector. It is easily seen that $H^*(f)$ is a more general form of the classical circulant matrix (see [6]) and $r$-circulant matrix (see \cite{b} and \cite{y}). In fact, if $\phi(x)=x^n-1$, then $H^*(f)$ is the ordinary circulant matrix generated by $f$. If $\phi(x)=x^n-r$, then $H^*(f)$ is the $r$-circulant matrix.

By (2.4), it follows immediately that
\begin{equation}
H^*(f+g)=H^*(f)+H^*(g),\ \text{and}\ H^*(\lambda f)=\lambda H^*(f),\ \forall \lambda\in \mathbb{R}.
\end{equation}
Moreover, $H^*(f)=0$ is a zero matrix if and only if $f=0$ is a zero vector, thus one has $H^*(f)=H^*(g)$ if and only if $f=g$. Let $M^*$ be the set of all ideal matrices, namely
\begin{equation}
M^*=\{H^*(f)\ |\ f\in \mathbb{R}^n\}.
\end{equation}
We may regard $H^*$ as a mapping from $\mathbb{R}^n$ to $M^*$ of which is a one to one correspondence.

In \cite{z}, we have shown that some basic properties for ideal matrix, most of them may be summarized as the following theorem.\\

\begin{thm}\label{th2} Suppose that $\phi(x)\in \mathbb{Z}[x]$ is a fixed polynomial with no multiple roots in $\mathbb{C}$, then for any two column vectors $f$ and $g$ in $\mathbb{R}^n$, we have

(i) $H^*(f)=f_0 I_n+f_1 H+\cdots+f_{n-1}H^{n-1}$;

(ii) $H^*(f)H^*(g)=H^*(H^*(f)g)$ and $H^*(f)H^*(g)=H^*(g)H^*(f)$;

(iii) $H^*(f)=V_{\phi}^{-1}\ \text{diag}\{f(w_1),f(w_2),\dots,f(w_n)\}V_{\phi}$;

(iv) det $(H^*(f))=\Pi_{i=1}^n f(w_i)$;

(v) $H^*(f)$ is an invertible matrix if and only if $(f(x),\phi(x))=1$ in $\mathbb{R}[x]$.

\noindent where $V_{\phi}$ is the Vandermonde matrix given by (2.2), $w_i\ (1\leqslant i\leqslant n)$ are all roots of $\phi(x)$ in $\mathbb{C}$, and diag$\{f(w_1),f(w_2),\dots,f(w_n)\}$ is the diagonal matrix.
\end{thm}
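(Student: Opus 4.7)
The plan is to prove the five parts in order; part (i) serves as the basic reduction that every ideal matrix is a polynomial in $H$, and the remaining parts cascade from it.

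For (i), I would read off the action of $H$ on the standard basis $e_1,\dots,e_n$ of $\mathbb{R}^n$ directly from the companion-matrix definition (2.3): $He_k=e_{k+1}$ for $k=1,\dots,n-1$, so $H^{k-1}e_1=e_k$. Both sides of the claimed identity commute with $H$ (the right-hand side by construction, and the left because its $k$-th column is by definition $H^{k-1}f$), and both send $e_1$ to $f=\sum_{k=0}^{n-1}f_k e_{k+1}$. Agreement on $e_1$ together with commutation with $H$ forces agreement on every basis vector, hence equality as matrices. Part (ii) then follows almost at once from (i). Commutativity $H^*(f)H^*(g)=H^*(g)H^*(f)$ is immediate, since both matrices are polynomials in the same matrix $H$. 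For the identity $H^*(f)H^*(g)=H^*(H^*(f)g)$, I would compare first columns: the first column of any ideal matrix is its generating vector, so both sides have first column $H^*(f)g$; both matrices are polynomials in $H$ (using $\phi(H)=0$ by Cayley--Hamilton to reduce $f(H)g(H)$ if necessary); and by (i) polynomials in $H$ are determined by their first column.

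The heart of the theorem is (iii), where the distinct-roots assumption on $\phi$ is essential. The conceptual device is to interpret $V_\phi^{T}$ as the coefficient-to-evaluation map on the quotient ring $R=\mathbb{R}[x]/\langle\phi(x)\rangle$: for any vector $h\in\mathbb{R}^n$ with associated polynomial $h(x)$ of degree ${}<n$, one has $(V_\phi^{T}h)_i=h(w_i)$. Under this identification, multiplication-by-$f$ on $R$, which by (i) is realized on coefficient vectors by the matrix $H^*(f)$, transports to componentwise multiplication by the tuple $(f(w_1),\dots,f(w_n))$ on $\mathbb{C}^n$, since $\phi(w_i)=0$ kills the reduction modulo $\phi$ once we evaluate at a root. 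Assembling this into the matrix identity $V_\phi^{T}H^*(f)=\mathrm{diag}(f(w_1),\dots,f(w_n))\,V_\phi^{T}$ and inverting the Vandermonde (nonsingular because the $w_i$ are distinct) produces the stated diagonalization. The main technical care needed is tracking the orientation convention for $V_\phi$ versus $V_\phi^{T}$ so that the two-sided conjugation in the statement matches the intertwining derived above.

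Parts (iv) and (v) then come as direct corollaries. Taking determinants in (iii) immediately gives $\det H^*(f)=\prod_{i=1}^{n}f(w_i)$, establishing (iv). For (v), $H^*(f)$ is invertible iff its determinant is nonzero, iff $f(w_i)\neq 0$ for every $i$; since $\phi$ has the distinct complex roots $w_1,\dots,w_n$, this is equivalent to $f$ and $\phi$ sharing no common root in $\mathbb{C}$, which by unique factorization in $\mathbb{C}[x]$ is equivalent to $\gcd(f(x),\phi(x))=1$ in $\mathbb{C}[x]$; and because $f,\phi\in\mathbb{R}[x]$, this descends to $\gcd(f,\phi)=1$ in $\mathbb{R}[x]$, since the gcd is preserved under field extension up to units.
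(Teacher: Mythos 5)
Your proposal is essentially correct and complete, whereas the paper itself gives no argument at all for this theorem---its ``proof'' is only a citation to Theorem 2 of \cite{z}---so there is nothing in the text to compare against except the statement. Two remarks. First, in part (i) your justification that the left-hand side $H^*(f)$ commutes with $H$ (``because its $k$-th column is by definition $H^{k-1}f$'') is not quite a proof: the last column of $H^*(f)H$ is $\sum_{i}\phi_i H^i f$, and identifying this with $H^n f$ already requires $\phi(H)=0$. The cleaner route, which you essentially have, is to avoid this entirely: $f(H)$ commutes with $H$ and sends the cyclic vector $e_1$ to $f$, so its $k$-th column is $f(H)e_k=f(H)H^{k-1}e_1=H^{k-1}f$, which is by definition the $k$-th column of $H^*(f)$. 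Second, the ``orientation convention'' you flag in (iii) is not just bookkeeping on your end: with $V_\phi$ as defined in (2.2) (roots indexing \emph{columns}), the row vectors $(1,w_i,\dots,w_i^{n-1})$ are the rows of $V_\phi^{T}$, and the correct intertwining is $V_\phi^{T}H^*(f)=\mathrm{diag}\{f(w_i)\}V_\phi^{T}$, i.e. $H^*(f)=(V_\phi^{T})^{-1}\mathrm{diag}\{f(w_i)\}V_\phi^{T}$. The identity as printed, with $V_\phi$ in place of $V_\phi^{T}$, fails whenever $V_\phi$ is not symmetric (e.g. $\phi(x)=x^2-x-2$, $f=e_2$, where $V_\phi H\neq \mathrm{diag}\{w_1,w_2\}V_\phi$), so the statement itself needs the transpose; your derivation is the correct one. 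Parts (ii), (iv) and (v) are fine as you give them, including the descent of the gcd from $\mathbb{C}[x]$ to $\mathbb{R}[x]$.
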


\begin{proof}
 See Theorem 2 of \cite{z}.
\end{proof}

Let $e_1,e_2,\dots,e_n$ be unit vectors of $\mathbb{R}^n$, that is
\begin{equation*}
e_1=
\begin{pmatrix}
     1 \\
     0 \\
     \vdots \\
     0
\end{pmatrix},
e_2=
\begin{pmatrix}
     0 \\
     1 \\
     \vdots \\
     0
\end{pmatrix},\cdots,
e_n=
\begin{pmatrix}
     0 \\
     0 \\
     \vdots \\
     1
\end{pmatrix}.
\end{equation*}
It is easy to verify that
\begin{equation}
H^*(e_1)=I_n,\ \text{and}\ H^*(e_k)=H^{k-1},\ 1\leqslant k\leqslant n.
\end{equation}
This means that the unit matrix $I_n$ and rotation matrices $H^k\ (1\leqslant k\leqslant n-1)$ are all the ideal matrices.

Let $\phi(x)\mathbb{R}[x]$ and $\phi(x)\mathbb{Z}[x]$ be the principal ideals generated by $\phi(x)$ in $\mathbb{R}[x]$ and $\mathbb{Z}[x]$ respectively, we denote the quotient rings $R$ and $\overline{R}$ by
\begin{equation}
R=\mathbb{Z}[x]/\phi(x)\mathbb{Z}[x],\ \text{and}\ \overline{R}=\mathbb{R}[x]/\phi(x)\mathbb{R}[x].
\end{equation}
There is a one to one correspondence between $\overline{R}$ and $\mathbb{R}^n$ given by
\begin{equation*}
f(x)=f_0+f_1 x+\cdots+f_{n-1}x^{n-1}\in \overline{R} \xrightarrow{\quad t\quad} f=\begin{pmatrix}
     f_0 \\
     f_1 \\
     \vdots \\
     f_{n-1}
\end{pmatrix}\in \mathbb{R}^n.
\end{equation*}
We denote this correspondence by $t$, that is
\begin{equation}
t(f(x))=f\ \text{and}\ t^{-1}(f)=f(x),\ \forall f(x)\in \overline{R},\ \text{and}\ f\in \mathbb{R}^n.
\end{equation}
If we restrict $t$ in the quotient ring $R$, then which gives a one to one correspondence between $R$ and $\mathbb{Z}^n$. First, we show that $t$ is also a ring isomorphism.\\

\begin{defn} \label{def2.1}For any two column vectors $f$ and $g$ in $\mathbb{R}^n$, we define the $\phi$-convolutional product $f*g$ by $f*g=H^*(f)g$.

By Theorem \ref{th2}, it is easy to see that
\begin{equation}
f*g=g*f,\ \text{and}\ H^*(f*g)=H^*(f)H^*(g).
\end{equation}
\end{defn}

\begin{lem}\label{lm2.1} For any two polynomials $f(x)$ and $g(x)$ in $\overline{R}$, we have
\begin{equation*}
t(f(x)g(x))=H^*(f)g=f*g.
\end{equation*}
\end{lem}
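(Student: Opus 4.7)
The plan is to recognize that, under the coordinate identification $t \colon \overline{R} \to \mathbb{R}^n$, the rotation matrix $H$ is precisely the matrix of "multiplication by $x$" in the quotient ring $\overline{R}$. Once this is established, Theorem \ref{th2}(i) immediately identifies $H^*(f)$ with the matrix of multiplication by $f(x)$, and the lemma follows by linearity.

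More concretely, I would proceed in the following steps. First, I would read off the columns of $H$ from (2.3) to verify that $H e_k = e_{k+1}$ for $1 \leqslant k \leqslant n-1$, and that $H e_n = (\phi_0, \phi_1, \ldots, \phi_{n-1})'$. Under $t$, the basis vector $e_k$ corresponds to the monomial $x^{k-1}$. Multiplying $x^{k-1}$ by $x$ in $\overline{R}$ gives $x^k$ for $k \leqslant n-1$, matching $H e_k = e_{k+1}$. For $k = n$, we use the defining relation $\phi(x) \equiv 0$ in $\overline{R}$, i.e. $x^n = \phi_0 + \phi_1 x + \cdots + \phi_{n-1} x^{n-1}$, whose coefficient vector is exactly $H e_n$. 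Thus on the basis, and therefore by $\mathbb{R}$-linearity everywhere, one has $t(x \cdot h(x)) = H\, t(h(x))$ for every $h(x) \in \overline{R}$.

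Iterating this identity, I obtain $t(x^k \cdot g(x)) = H^k g$ for all $0 \leqslant k \leqslant n-1$ and every $g(x) \in \overline{R}$ with coefficient vector $g$. Writing $f(x) = f_0 + f_1 x + \cdots + f_{n-1} x^{n-1}$ and using linearity of $t$ together with Theorem \ref{th2}(i), I get
\begin{equation*}
t(f(x)g(x)) = \sum_{k=0}^{n-1} f_k\, t(x^k g(x)) = \sum_{k=0}^{n-1} f_k H^k g = \left(\sum_{k=0}^{n-1} f_k H^k\right) g = H^*(f)\, g,
\end{equation*}
and by Definition \ref{def2.1} this equals $f * g$, which is exactly the claim.

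The only delicate point — really the whole content of the argument — is the base identification that $H$ implements multiplication by $x$ modulo $\phi(x)$; this rests on correctly interpreting the last column of $H$ as the coefficient vector of $x^n \bmod \phi(x)$, which is forced by the form (2.1) of $\phi(x)$. Everything after that is linear bookkeeping and a direct appeal to Theorem \ref{th2}(i), so I do not expect any genuine obstacle.
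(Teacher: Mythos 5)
Your proposal is correct and follows essentially the same route as the paper: both establish the key identity $t(xg(x))=Hg$ (the paper by expanding $xg(x)$ for a general $g$, you by checking it on the monomial basis and extending by linearity), then iterate to get $t(x^kg(x))=H^kg$ and conclude via Theorem \ref{th2}(i). No gaps.
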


\begin{proof} Let $g(x)=g_0+g_1 x+\cdots+g_{n-1}x^{n-1}\in \overline{R}$, then
\begin{equation*}
xg(x)=\phi_0 g_{n-1}+(g_0+\phi_1 g_{n-1})x+\cdots+(g_{n-2}+\phi_{n-1}g_{n-1})x^{n-1}.
\end{equation*}
It follows that
\begin{equation}
t(xg(x))=Ht(g(x))=Hg.
\end{equation}
Hence, for any $0\leqslant k\leqslant n-1$, we have
\begin{equation}
t(x^k g(x))=H^k t(g(x))=H^k g,\ 0\leqslant k\leqslant n-1.
\end{equation}
Let $f(x)=f_0+f_1 x+\cdots+f_{n-1}x^{n-1}\in \overline{R}$, by (i) of Theorem \ref{th2}, we have
\begin{equation*}
t(f(x)g(x))=\sum_{i=0}^{n-1} f_i t(x^i g(x))=\sum_{i=0}^{n-1} f_i H^i g=H^*(f)g.
\end{equation*}
The lemma follows.

\end{proof}

\begin{thm}\label{tm3}Under $\phi$-convolutional product, $\mathbb{R}^n$ is a commutative ring with identity element $e_1$ and $\mathbb{Z}^n\subset \mathbb{R}^n$ is its subring. Moreover, we have the following ring isomorphisms
\begin{equation*}
\overline{R}\cong \mathbb{R}^n \cong M^*,\ \text{and}\ R\cong \mathbb{Z}^n\cong M_{\mathbb{Z}}^{*},
\end{equation*}
where $M^*$ is the set of all ideal matrices given by (2.6), and $M_{\mathbb{Z}}^{*}$ is the set of all integer ideal matrices.
\end{thm}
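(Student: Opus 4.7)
The plan is to prove the theorem by transporting the known commutative ring structure of $\overline{R}$ to $\mathbb{R}^n$ via the bijection $t$, and then using the bijection $H^*$ to move the structure further onto $M^*$. Essentially all the hard work has already been done in Lemma \ref{lm2.1} and Theorem \ref{th2}; the remaining task is just to verify that $t$ and $H^*$ preserve the operations and identities.

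First I would check that $t:\overline{R}\to\mathbb{R}^n$ is a bijective additive map; this is immediate from the fact that every element of $\overline{R}$ has a unique representative of degree less than $n$. Multiplicativity is the content of Lemma \ref{lm2.1}: for any $f(x),g(x)\in\overline{R}$,
\begin{equation*}
t(f(x)g(x))=H^*(f)g=f*g=t(f(x))*t(g(x)).
\end{equation*}
Since $\overline{R}$ is a commutative ring with identity $1$ and $t(1)=e_1$, transporting the structure via $t$ shows that $(\mathbb{R}^n,+,*)$ is a commutative ring with identity $e_1$, and $t$ is a ring isomorphism $\overline{R}\cong\mathbb{R}^n$. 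Restricting $t$ to $R\subset\overline{R}$ gives a ring isomorphism onto $\mathbb{Z}^n$, because a polynomial has integer coefficients (mod $\phi$) if and only if its coordinate vector lies in $\mathbb{Z}^n$. Consequently $\mathbb{Z}^n$ is a subring of $\mathbb{R}^n$.

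Next I would show that $H^*:\mathbb{R}^n\to M^*$ is a ring isomorphism. Bijectivity is guaranteed by the paragraph following (2.5). Additivity is (2.5) itself. Multiplicativity follows from Theorem \ref{th2}(ii): $H^*(f*g)=H^*(H^*(f)g)=H^*(f)H^*(g)$. The identity is handled by (2.7), which gives $H^*(e_1)=I_n$. Hence $\mathbb{R}^n\cong M^*$ as rings, and restricting $H^*$ to $\mathbb{Z}^n$ gives the isomorphism $\mathbb{Z}^n\cong M^*_{\mathbb{Z}}$ since $M^*_{\mathbb{Z}}$ is by definition the image $H^*(\mathbb{Z}^n)$.

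There is no real obstacle here; the only subtlety worth mentioning explicitly is that commutativity of $*$, which looks asymmetric from the definition $f*g=H^*(f)g$, is not obvious from (2.4) alone but is guaranteed by Theorem \ref{th2}(ii), and alternatively can be read off at once from the isomorphism with the commutative ring $\overline{R}$. Once that point is addressed, chaining the two isomorphisms $\overline{R}\xrightarrow{\ t\ }\mathbb{R}^n\xrightarrow{\ H^*\ }M^*$ (and its integer restriction) completes the proof.
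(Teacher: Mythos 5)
Your proposal is correct and follows essentially the same route as the paper: transport the ring structure of $\overline{R}$ to $\mathbb{R}^n$ via $t$ using Lemma \ref{lm2.1}, verify the identity via $e_1*g=H^*(e_1)g=g$, and restrict to integer vectors for the subring statement. You are in fact slightly more explicit than the paper about the second leg $\mathbb{R}^n\cong M^*$ via $H^*$ (additivity from (2.5), multiplicativity from Theorem \ref{th2}(ii)), which the paper leaves implicit.
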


\begin{proof} Let $f(x)\in\overline{R}$ and $g(x)\in\overline{R}$, then
\begin{equation*}
t(f(x)+g(x))=f+g=t(f(x))+t(g(x)),
\end{equation*}
and
\begin{equation*}
t(f(x)g(x))=H^*(f)g=f*g=t(f(x))*t(g(x)).
\end{equation*}
This means that $t$ is a ring isomorphism. Since $f*g=g*f$ and $e_1*g=H^*(e_1)g=I_n g=g$, then $\mathbb{R}^n$ is a commutative ring with $e_1$ as the identity elements. Noting $H^*(f)$ is an integer matrix if and only if $f\in \mathbb{Z}^n$ is an integer vector, the isomorphism of subrings follows immediately.

\end{proof}

According to property (v) of Theorem \ref{th2}, $H^*(f)$ is an invertible matrix whenever $(f(x),\phi(x))=1$ in $\mathbb{R}[x]$, we show that the inverse of an ideal matrix is again an ideal matrix.\\

\begin{lem}\label{lm2.2} Let $f(x)\in \overline{R}$ and $(f(x),\phi(x))=1$ in $\mathbb{R}[x]$, then
\begin{equation*}
(H^*(f))^{-1}=H^*(u).
\end{equation*}
where $u(x)\in\overline{R}$ is the unique polynomial such that $u(x)f(x)\equiv 1$ (mod $\phi(x)$).
\end{lem}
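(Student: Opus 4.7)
The plan is to exhibit $H^*(u)$ as a two-sided inverse of $H^*(f)$ by pushing the congruence $u(x)f(x)\equiv 1\pmod{\phi(x)}$ through the ring isomorphism $t\colon \overline{R}\to \mathbb{R}^n$ established in Theorem \ref{tm3}, together with the multiplicative property (ii) of Theorem \ref{th2}.

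First I would note that the existence and uniqueness of $u(x)\in \overline{R}$ with $u(x)f(x)\equiv 1\pmod{\phi(x)}$ is automatic: since $\mathbb{R}[x]$ is a principal ideal domain and $(f(x),\phi(x))=1$, B\'ezout's identity gives $u_0(x),v(x)\in \mathbb{R}[x]$ with $u_0(x)f(x)+v(x)\phi(x)=1$, and the reduction of $u_0(x)$ modulo $\phi(x)$ supplies a unique representative $u(x)\in\overline{R}$ of degree at most $n-1$. Equivalently, $f(x)$ is a unit in $\overline{R}$ with inverse $u(x)$.

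Next I would apply Lemma \ref{lm2.1} to the product $u(x)f(x)$: it gives
\begin{equation*}
t\bigl(u(x)f(x)\bigr)=H^*(u)f=u*f.
\end{equation*}
But $u(x)f(x)=1$ in $\overline{R}$, and under $t$ the identity of $\overline{R}$ corresponds to $e_1$ (this is the content of $H^*(e_1)=I_n$ together with the fact, recorded in Theorem \ref{tm3}, that $e_1$ is the identity for $*$). Hence $H^*(u)f=e_1$. Now invoke property (ii) of Theorem \ref{th2}:
\begin{equation*}
H^*(u)\,H^*(f)=H^*\!\bigl(H^*(u)f\bigr)=H^*(e_1)=I_n,
\end{equation*}
and by the commutativity in the same clause (ii) we likewise have $H^*(f)\,H^*(u)=I_n$. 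Therefore $H^*(f)$ is invertible with $(H^*(f))^{-1}=H^*(u)$.

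There is no genuine obstacle here: the lemma is essentially a translation of the statement ``$f(x)$ is a unit in $\overline{R}$ with inverse $u(x)$'' into matrix language via the ring isomorphism $\overline{R}\cong M^*$ of Theorem \ref{tm3}. The only mild care needed is to keep the two ring structures on $\mathbb{R}^n$ straight — convolutional product versus coordinatewise operations — and to recognize that the multiplicative identity of $\overline{R}$ corresponds to the standard basis vector $e_1$, so that $H^*(e_1)=I_n$ plays the role of ``$1\mapsto 1$'' under the isomorphism.
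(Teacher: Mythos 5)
Your proof is correct and follows essentially the same route as the paper: both apply Lemma \ref{lm2.1} to get $u*f=e_1$ and then use $H^*(u)H^*(f)=H^*(u*f)=H^*(e_1)=I_n$. Your added remarks on the existence and uniqueness of $u(x)$ via B\'ezout and on two-sidedness of the inverse are harmless elaborations of the same argument.
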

\begin{proof}By lemma \ref{lm2.1}, we have $u*f=e_1$, it follows that
\begin{equation*}
H^*(u)H^*(f)=H^*(e_1)=I_n.
\end{equation*}
Thus we have $(H^*(f))^{-1}=H^*(u)$. It is worth to note that if $H^*(f)$ is an invertible integer matrix, then $(H^*(f))^{-1}$ is not an integer matrix in general.

\end{proof}

Sometimes, the following lemma may be useful, especially, when we consider an integer matrix.\\

\begin{lem}\label{lm2.3}  Let $f(x)\in \mathbb{Z}[x]$ and $(f(x),\phi(x))=1$ in $\mathbb{Z}[x]$, then we have $(f(x),\phi(x))=1$ in $\mathbb{R}[x]$.
\end{lem}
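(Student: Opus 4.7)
The plan is to transfer coprimality up the chain $\mathbb{Z}[x]\subset \mathbb{Q}[x]\subset \mathbb{R}[x]$, using Gauss's lemma (or the monic-divisor trick) for the first step and Bezout's identity in a PID for the second. The key observation that makes everything work is that $\phi(x)$ is monic with integer coefficients by (2.1).

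First I would argue by contrapositive: suppose $(f(x),\phi(x))\neq 1$ in $\mathbb{R}[x]$. Since $\phi(x)$ has no multiple roots, common factors correspond to common roots in $\mathbb{C}$, so there is some $w_i$ with $f(w_i)=0$. The minimal polynomial $d(x)\in\mathbb{Q}[x]$ of $w_i$ (taken monic) then divides both $f(x)$ and $\phi(x)$ in $\mathbb{Q}[x]$ and has $\deg d\geqslant 1$. Equivalently, and more directly, one observes that $\mathbb{Q}[x]$ is a PID, so $(f,\phi)$ having positive-degree common factor in $\mathbb{R}[x]$ forces a positive-degree common factor in $\mathbb{Q}[x]$ (if the gcd in $\mathbb{Q}[x]$ were $1$, Bezout would produce $u,v\in\mathbb{Q}[x]$ with $uf+v\phi=1$, which would remain valid over $\mathbb{R}[x]$).

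Next I would upgrade the common factor $d(x)\in\mathbb{Q}[x]$ to one lying in $\mathbb{Z}[x]$. Since $\phi(x)$ is monic in $\mathbb{Z}[x]$ and $d(x)\mid\phi(x)$ in $\mathbb{Q}[x]$ with $d$ chosen monic, the standard corollary of Gauss's lemma (a monic integer polynomial that factors monically over $\mathbb{Q}$ factors monically over $\mathbb{Z}$) gives $d(x)\in\mathbb{Z}[x]$. Then I would use polynomial division by the monic $d(x)$ in $\mathbb{Z}[x]$: write $f(x)=d(x)q(x)+r(x)$ with $q,r\in\mathbb{Z}[x]$ and $\deg r<\deg d$. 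Combined with the fact that $d(x)\mid f(x)$ in $\mathbb{Q}[x]$, this forces $r(x)=0$, so $d(x)$ is a common factor of $f$ and $\phi$ in $\mathbb{Z}[x]$ of positive degree, contradicting $(f,\phi)=1$ in $\mathbb{Z}[x]$.

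The main (and really only) obstacle is being careful about the meaning of $(f,\phi)=1$ in $\mathbb{Z}[x]$ (which is a non-PID) and ensuring that the Gauss-type descent actually produces a common divisor in $\mathbb{Z}[x]$, not merely in $\mathbb{Q}[x]$. Once the monicity of $\phi$ is exploited as above, the argument is entirely routine and no computation is needed.
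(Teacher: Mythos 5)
Your proof is correct and follows essentially the same route as the paper: transfer coprimality along $\mathbb{Z}[x]\subset\mathbb{Q}[x]\subset\mathbb{R}[x]$, with the second step handled by Bezout's identity in the PID $\mathbb{Q}[x]$ (an identity $a(x)f(x)+b(x)\phi(x)=1$ over $\mathbb{Q}$ remains valid over $\mathbb{R}$), merely phrased contrapositively. The one substantive difference is that the paper simply asserts the first step --- that $(f,\phi)=1$ in $\mathbb{Z}[x]$ implies $(f,\phi)=1$ in $\mathbb{Q}[x]$ --- whereas you actually prove it, using the monicity of $\phi(x)$ from (2.1) together with Gauss's lemma and division by a monic polynomial to pull a positive-degree common divisor from $\mathbb{Q}[x]$ down to $\mathbb{Z}[x]$. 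That is the only step where the hypothesis ``coprime in $\mathbb{Z}[x]$'' (a non-PID, so not automatically giving a Bezout identity) genuinely has to be unpacked, so your version closes a gap the paper leaves open; your side remark that a common factor in $\mathbb{R}[x]$ yields a common root via the no-multiple-roots hypothesis is unnecessary (and not needed, since your Bezout argument already does the job), but it is harmless.
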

\begin{proof} Let $Q$ be the rational number field. Since $(f(x),\phi(x))=1$ in $\mathbb{Z}[x]$, then $(f(x),\phi(x))=1$ in $\mathbb{Q}[x]$. We know that $\mathbb{Q}[x]$ is a principal  ideal domain, thus there are two polynomials $a(x)$ and $b(x)$ in $\mathbb{Q}[x]$ such that
\begin{equation*}
a(x)f(x)+b(x)\phi(x)=1.
\end{equation*}
This means that $(f(x),\phi(x))=1$ in $\mathbb{R}[x]$.

\end{proof}

\section{Cyclic Lattices and Ideal Lattices}

As we known that cyclic code play a central role in algebraic coding theorem (see Chapter 6 of \cite{L}). In \cite{z}, we extended ordinary cyclic code to more general forms, namely $\phi$-cyclic codes. To obtain an analogous concept of $\phi$-cyclic code in $\mathbb{R}^n$, we note that every rotation matrix $H$ defines a linear transformation of $\mathbb{R}^n$ by $x\rightarrow Hx$.\\

\begin{defn}\label{def3.1} A linear subspace $C\subset \mathbb{R}^n$ is called a $\phi$-cyclic subspace if $\forall \alpha\in C\Rightarrow H\alpha\in C$. A lattice $L\subset \mathbb{R}^n$ is called a $\phi$-cyclic lattice if $\forall \alpha\in L\Rightarrow H\alpha\in L$.

\end{defn}
In other words, a $\phi$-cyclic subspace $C$ is a linear subspace of $\mathbb{R}^n$, of which is closed under linear transformation $H$. A $\phi$-cyclic lattice $L$ is a lattice of $\mathbb{R}^n$ of which is closed under $H$. If $\phi(x)=x^n-1$, then $H$ is the classical circulant matrix and the corresponding cyclic lattice was first appeared in Micciancio \cite{D2}, but he do not discuss the further property for these lattices. To obtain the explicit algebraic construction of $\phi$-cyclic lattice, we first show that there is a one to one correspondence between $\phi$-cyclic subspaces of $\mathbb{R}^n$ and the ideals of $\overline{R}$.\\

\begin{lem} \label{lm3.1} Let $t$ be the correspondence between $\overline{R}$ and $\mathbb{R}^n$ given by (2.9), then a subset $C\subset \mathbb{R}^n$ is a $\phi$-cyclic subspace of $\mathbb{R}^n$, if and only if $t^{-1}(C)\subset \overline{R}$ is an ideal.
\end{lem}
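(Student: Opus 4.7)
The plan is to exploit the fact, established in Lemma \ref{lm2.1} and equation (2.11), that the bijection $t\colon\overline{R}\to\mathbb{R}^n$ carries multiplication by $x$ in $\overline{R}$ to the action of $H$ on $\mathbb{R}^n$, i.e.\ $t(xg(x))=Hg$. Since $t$ is also additive and $\mathbb{R}$-linear (the scalars $\mathbb{R}\subset\overline{R}$ act by $t(\lambda g(x))=\lambda g$), the correspondence identifies $\mathbb{R}$-linear subspaces of $\mathbb{R}^n$ with $\mathbb{R}$-linear subspaces of $\overline{R}$, and the two ``closure'' conditions (closure under $H$ versus closure under multiplication by $x$) match up directly.

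For the forward direction, I would assume that $I=t^{-1}(C)$ is an ideal of $\overline{R}$. Because every $\lambda\in\mathbb{R}$ lies in $\overline{R}$, $I$ is automatically closed under scalar multiplication, so $C=t(I)$ is an $\mathbb{R}$-linear subspace of $\mathbb{R}^n$. Moreover, for any $f(x)\in I$ we have $x\cdot f(x)\in I$, and applying $t$ gives $H\,t(f(x))\in C$. Hence $C$ is $\phi$-cyclic.

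For the converse, suppose $C\subset\mathbb{R}^n$ is a $\phi$-cyclic subspace, and set $I=t^{-1}(C)$. Linearity of $t$ shows $I$ is an $\mathbb{R}$-linear subspace of $\overline{R}$, so in particular an additive subgroup closed under scalar multiplication. The key point is to upgrade closure under $H$ (equivalently, under multiplication by $x$) to closure under multiplication by arbitrary $g(x)\in\overline{R}$. Iterating $t(x^{k}f(x))=H^{k}t(f(x))\in C$ shows $x^{k}f(x)\in I$ for every $f(x)\in I$ and every $0\le k\le n-1$. Writing a general $g(x)=\sum_{k=0}^{n-1}g_{k}x^{k}\in\overline{R}$ and using $\mathbb{R}$-linearity of $I$, we conclude $g(x)f(x)=\sum_{k}g_{k}\,x^{k}f(x)\in I$, so $I$ is an ideal.

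There is no real obstacle here; the only thing to be careful about is observing that an ideal of $\overline{R}$ is automatically an $\mathbb{R}$-subspace (so that the ``linear subspace'' hypothesis on $C$ is not lost when passing through $t$), and that closure under the single operator $H$ is enough, via iteration together with $\mathbb{R}$-linearity, to obtain closure under multiplication by any element of $\overline{R}$. Both observations follow directly from Theorem \ref{tm3} and Lemma \ref{lm2.1}.
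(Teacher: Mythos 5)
Your proposal is correct and follows essentially the same route as the paper's proof: both directions rest on the identity $t(x^{k}g(x))=H^{k}g$ from Lemma \ref{lm2.1}, with linearity of $t$ used to pass between subspaces and ideals. You merely spell out more explicitly the final step of the converse (expanding $g(x)=\sum_k g_k x^k$ to upgrade closure under $x$ to closure under arbitrary ring elements), which the paper leaves implicit.
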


\begin{proof} We extend the correspondence $t$ to subsets of $\overline{R}$ and $\mathbb{R}^n$ by
\begin{equation}
C(x)\subset \overline{R} \xrightarrow{\quad t\quad} C=\{c|c(x)\in C(x)\}\subset \mathbb{R}^n.
\end{equation}
Let $C(x)\subset \overline{R}$ be an ideal, it is clear that $C\subset t(C(x))$ is a linear subspace of $\mathbb{R}^n$. To prove $C$ is a $\phi$-cyclic subspace, we note that if $c(x)\in C(x)$, then by (2.11)
\begin{equation*}
xc(x)\in C(x)\Leftrightarrow Ht(c(x))=Hc\in C.
\end{equation*}
Therefore, if $C(x)$ is an ideal of $\overline{R}$, then $t(C(x))=C$ is a $\phi$-cyclic subspace of $\mathbb{R}^n$. Conversely, if $C\subset \mathbb{R}^n$ is a $\phi$-cyclic subspace, then for any $k\geqslant 1$, we have $H^k c\in C$ whenever $c\in C$, it implies
\begin{equation*}
\forall c(x)\in C(x)\Rightarrow x^k c(x)\in C(x),\ 0\leqslant k\leqslant n-1,
\end{equation*}
which means that $C(x)$ is an ideal of $\overline{R}$. We complete the proof.

\end{proof}

By above lemma, to find a $\phi$-cyclic subspace in $\mathbb{R}^n$, it is enough to find an ideal of $\overline{R}$. There are two trivial ideals $C(x)=0$ and $C(x)=\overline{R}$, the corresponding $\phi$-cyclic subspace are $C=0$ and $C=\mathbb{R}^n$. To find non-trivial $\phi$-cyclic subspaces, we make use of the homomorphism theorems, which is a standard technique in algebra. Let $\pi$ be the natural homomorphism from $\mathbb{R}[x]$ to $\overline{R}$, ker$\pi=\phi(x)\mathbb{R}[x]$. We write $\phi(x)\mathbb{R}[x]$ by $<\phi(x)>$. Let $N$ be an ideal of $\mathbb{R}[x]$ satisfying
\begin{equation}
<\phi(x)>\subset N \subset \mathbb{R}[x] \xrightarrow{\quad\pi\quad} \overline{R}=\mathbb{R}[x] / <\phi(x)>.
\end{equation}
Since $\mathbb{R}[x]$ is a principal ideal domain, then $N=<g(x)>$ is a principal ideal generated by a monic polynomial $g(x)\in \mathbb{R}[x]$. It is easy to see that
\begin{equation*}
<\phi(x)>\subset <g(x)>\Leftrightarrow g(x)|\phi(x)\ \text{in}\ \mathbb{R}[x].
\end{equation*}
It follows that all ideals $N$ satisfying (3.2) are given by
\begin{equation*}
\{<g(x)>\Big|\ g(x)\in \mathbb{R}[x]\ \text{is monic and}\ g(x)|\phi(x)\}.
\end{equation*}
We write by $<g(x)>$ mod $\phi(x)$, the image of $<g(x)>$ under $\pi$, i.e.
\begin{equation*}
<g(x)>\ \text{mod}\ \phi(x)=\pi(<g(x)>)
\end{equation*}
It is easy to check
\begin{equation}
<g(x)>\ \text{mod}\ \phi(x)=\{a(x)g(x)\ |\ a(x)\in \mathbb{R}[x] \text{\ and deg} a(x)+\text{deg}g(x)<n\},
\end{equation}
more precisely, which is a representative elements set of $<g(x)>$ mod $\phi(x)$. By homomorphism theorem in ring theory, all ideals of $\overline{R}$ given by
\begin{equation}
\{<g(x)> \text{\ mod\ } \phi(x) \ \Big|\ g(x) \in \mathbb{R}[x] \text{\ is monic and\ } g(x)|\phi(x)\}.
\end{equation}
Let $d$ be the number of monic divisors of $\phi(x)$ in $\mathbb{R}[x]$, we have\\

\begin{cor}\label{co3.1} The number of $\phi$-cyclic subspace of $\mathbb{R}^n$ is $d$.
\end{cor}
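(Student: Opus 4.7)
The plan is to combine Lemma \ref{lm3.1} with the correspondence theorem for ring homomorphisms and the fact that $\mathbb{R}[x]$ is a principal ideal domain. By Lemma \ref{lm3.1}, $t$ restricts to a bijection between $\phi$-cyclic subspaces $C\subset\mathbb{R}^n$ and ideals $C(x)\subset\overline{R}$. So it suffices to count the ideals of $\overline{R}=\mathbb{R}[x]/\langle\phi(x)\rangle$.

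First I would invoke the standard correspondence theorem applied to the natural projection $\pi:\mathbb{R}[x]\to\overline{R}$: ideals of $\overline{R}$ are in one-to-one correspondence with ideals $N$ of $\mathbb{R}[x]$ with $\langle\phi(x)\rangle\subseteq N\subseteq\mathbb{R}[x]$, the bijection being $N\mapsto\pi(N)=N\ \text{mod}\ \phi(x)$. This is exactly the setup already indicated in the paragraph preceding the corollary in equation (3.2).

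Next I would use that $\mathbb{R}[x]$ is a PID, so every such $N$ has the form $N=\langle g(x)\rangle$, and $g(x)$ can be normalized to be monic. The condition $\langle\phi(x)\rangle\subseteq\langle g(x)\rangle$ is equivalent to $g(x)\mid\phi(x)$ in $\mathbb{R}[x]$. The only thing to verify is well-definedness of the count: two distinct monic divisors $g_1(x)\neq g_2(x)$ of $\phi(x)$ generate distinct ideals, because in a PID the monic generator of a nonzero ideal is unique. This gives a bijection between monic divisors of $\phi(x)$ in $\mathbb{R}[x]$ and ideals of $\overline{R}$, hence with $\phi$-cyclic subspaces of $\mathbb{R}^n$, producing the desired count $d$.

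There is no real obstacle here; the corollary is essentially a bookkeeping consequence of Lemma \ref{lm3.1}, the ideal correspondence theorem, and the PID structure of $\mathbb{R}[x]$. The only subtlety worth stating explicitly is the uniqueness of the monic generator, which ensures the counting map is injective.
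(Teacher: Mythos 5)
Your proof is correct and follows essentially the same route as the paper, which derives the corollary from Lemma \ref{lm3.1} together with the correspondence theorem and the PID structure of $\mathbb{R}[x]$ as laid out in the discussion surrounding (3.2)--(3.4). Your explicit remark on the uniqueness of the monic generator is a small but welcome addition that the paper leaves implicit.
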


 Next, we discuss $\phi$-cyclic lattice, which is the geometric analogy of  cyclic code.   The $\phi$-cyclic  subspace of $ \mathbb{R}^{n}$ maybe regarded as the algebraic analogy of cyclic code. Let the quotient rings $R$ and $\overline{R}$ given by (2.8).  A $R$-module is an Abel group $\wedge$ such that there is an operator $\lambda \alpha\in \wedge$ for all $\lambda\in R$ and $\alpha\in \wedge$, satisfying $1\cdot \alpha=\alpha$ and $(\lambda_1 \lambda_2)\alpha=\lambda_1 (\lambda_2 \alpha)$. It is easy to see that $\overline{R}$ is a $R$-module, if $\wedge\subset\overline{R}$ and $\wedge$ is a $R$-module, then $\wedge$ is called a $R$-submodule of $\overline{R}$. All $R$-modules we discuss here are $R$-submodule of $\overline{R}$. On the other hand, if $I\subset R$, then $I$ is an ideal of $R$, if and only if $I$ is a $R$-module. Let $\alpha\in\overline{R}$, the cyclic $R$-module generated by $\alpha$ be defined by
\begin{equation}
R\alpha=\{\lambda \alpha\ |\ \lambda\in R\}.
\end{equation}
If there are finitely many polynomials $\alpha_1,\alpha_2,\dots,\alpha_k$ in $\overline{R}$ such that $\wedge=R\alpha_1+R\alpha_2+\cdots+R\alpha_k$, then $\wedge$ is called a finitely generated $R$-module, which is a $R$-submodule of $\overline{R}$.

Now, if $L\subset \mathbb{R}^n$ is a $\phi$-cyclic lattice, $g\in \mathbb{R}^n$, $H^*(g)$ is the ideal matrix generated by vector $g$, and $L(H^*(g))$ is the lattice generated by $H^*(g)$. It is easy to show that any $L(H^*(g))$ is a $\phi$-cyclic lattice and
\begin{equation}
L(H^*(g))\subset L,\ \text{whenever}\ g\in L,
\end{equation}
which implies that $L(H^*(g))$ is the smallest $\phi$-cyclic lattice of which contains vector $g$. Therefore, we call $L(H^*(g))$ is a minimal $\phi$-cyclic lattice in $\mathbb{R}^n$.\\




\begin{lem}\label{lm3.3} There is a one to one correspondence between the minimal $\phi$-cyclic lattice in $\mathbb{R}^n$ and the cyclic $R$-submodule in $\overline{R}$, namely,
\begin{equation*}
t(Rg(x))=L(H^*(g)),\ \text{for all}\ g(x)\in \overline{R}
\end{equation*}
and
\begin{equation*}
t^{-1}(L(H^*(g)))=Rg(x),\ \text{for all}\ g\in \mathbb{R}^n.
\end{equation*}
\end{lem}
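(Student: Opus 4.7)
The plan is to establish the identity $t(Rg(x)) = L(H^*(g))$ directly from the definitions, invoking Lemma \ref{lm2.1} together with the commutativity of the $\phi$-convolution from Theorem \ref{th2}(ii). The converse identity then follows by applying the inverse map $t^{-1}$ to both sides, and the claimed one-to-one correspondence is immediate from the fact (Theorem \ref{tm3}) that $t$ itself is a bijection between $\overline{R}$ and $\mathbb{R}^n$.

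Fix $g(x) \in \overline{R}$ and write $g = t(g(x)) \in \mathbb{R}^n$. For any $\lambda(x) \in R \subset \overline{R}$, Lemma \ref{lm2.1} gives
\begin{equation*}
t(\lambda(x) g(x)) = H^*(\lambda)\,g = \lambda * g,
\end{equation*}
and by Theorem \ref{th2}(ii) the $\phi$-convolution is commutative, so this equals $g * \lambda = H^*(g)\,\lambda$. The key observation is that as $\lambda(x)$ ranges over $R = \mathbb{Z}[x]/\langle \phi(x)\rangle$, its coefficient vector $\lambda = t(\lambda(x))$ ranges over all of $\mathbb{Z}^n$. Consequently
\begin{equation*}
t(Rg(x)) = \{\, H^*(g)\,\lambda \mid \lambda \in \mathbb{Z}^n \,\} = L(H^*(g)),
\end{equation*}
which is the first identity. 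Applying $t^{-1}$ to both sides yields the second.

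The bijection between minimal $\phi$-cyclic lattices in $\mathbb{R}^n$ and cyclic $R$-submodules of $\overline{R}$ is then immediate: every minimal $\phi$-cyclic lattice has the form $L(H^*(g))$ by definition, every cyclic $R$-submodule has the form $Rg(x)$, and the two identities above show that $t$ carries one family onto the other with $t^{-1}$ as the inverse correspondence.

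I do not anticipate a serious obstacle; the argument is essentially a direct computation using the already-established ring isomorphism. The only mild subtlety is that Lemma \ref{lm2.1} is stated for polynomials in $\overline{R}$, whereas we apply it with $\lambda(x) \in R$, but this is harmless since $R \subset \overline{R}$. One might also worry whether $L(H^*(g))$ is a genuine lattice when $H^*(g)$ is singular (that is, when $(g(x),\phi(x)) \neq 1$ in $\mathbb{R}[x]$), but in that case $L(H^*(g))$ remains a discrete additive subgroup of $\mathbb{R}^n$ and hence still a lattice, possibly of lower rank, by Theorem \ref{th1}, so the argument is unaffected.
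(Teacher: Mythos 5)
Your main computation is correct and is essentially the paper's own proof: the paper likewise applies Lemma \ref{lm2.1} together with $H^*(b)g=H^*(g)b$ to get $t(b(x)g(x))=H^*(g)b$ and then identifies $t(Rg(x))$ with $\{H^*(g)b : b\in\mathbb{Z}^n\}=L(H^*(g))$ (the paper phrases it as two inclusions, you as one set equality; the content is the same). One caveat: your closing aside is false. When $H^*(g)$ is singular the set $H^*(g)\mathbb{Z}^n$ need \emph{not} be discrete: take $\phi(x)=x^2-2$ and $g(x)=x-\sqrt{2}$, for which $H^*(g)\mathbb{Z}^2=\{(a-\sqrt{2}b)(-\sqrt{2},1)' : a,b\in\mathbb{Z}\}$ is dense in a line since $\mathbb{Z}+\sqrt{2}\mathbb{Z}$ is dense in $\mathbb{R}$. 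This does not affect the set-theoretic identity you (and the paper) actually prove, and the paper itself silently assumes $L(H^*(g))$ is a lattice, but you should not claim Theorem \ref{th1} rescues the singular case.
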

\begin{proof} Let $b(x)\in R$, by lemma \ref{lm2.1}, we have
\begin{equation*}
t(b(x)g(x))=H^*(b)g=H^*(g)b\in L(H^*(g)),
\end{equation*}
and $t(Rg(x))\subset L(H^*(g))$. Conversely, if $\alpha\in L(H^*(g))$, and $\alpha=H^*(g)b$ for some integer vector $b$, by lemma \ref{lm2.1} again, we have $b(x)g(x)\in Rg(x)$, and $t(b(x)g(x))=\alpha$. This implies that $L(H^*(g))\subset t(Rg(x))$, and
\begin{equation*}
t(Rg(x))=L(H^*(g)).
\end{equation*}
The lemma follows immediately.

\end{proof}

Suppose $L=L(\beta_1,\beta_2,\dots,\beta_m)$ is arbitrary $\phi$-cyclic lattice, where $B=[\beta_1,\beta_2,\dots,\beta_m]_{n\times m}$ is the generated matrix of $L$.  $L$ may be expressed as the sum of finitely many minimal $\phi$-cyclic lattices, in fact, we have
\begin{equation}
L=L(H^*(\beta_1))+L(H^*(\beta_2))+\cdots+L(H^*(\beta_m)).
\end{equation}

To state and prove our main results, first, we give a definition of prime spot in $\mathbb{R}^n$.\\

\begin{defn}\label{def3.2} Let $g\in \mathbb{R}^n$, and $g(x)=t^{-1}(g)\in \overline{R}$. If $(g(x),\phi(x))=1$ in $\mathbb{R}[x]$, we call $g$ is a prime spot of $\mathbb{R}^n$.

\end{defn}
By (v) of Theorem \ref{th2}, $g\in \mathbb{R}^n$ is a prime spot if and only if $H^*(g)$ is an invertible matrix, thus the minimal $\phi$-cyclic lattice $L(H^*(g))$ generated by a prime spot is a full-rank lattice.\\

\begin{lem}\label{lm3.4} Let $g$ and $f$ be two prime spots of $\mathbb{R}^n$, then $L(H^*(g))+L(H^*(f))$ is a full-rank $\phi$-cyclic lattice.

\end{lem}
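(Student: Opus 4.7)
The plan is to verify three properties in sequence: that the sum $L(H^*(g))+L(H^*(f))$ is closed under $H$ (hence $\phi$-cyclic once it is shown to be a lattice), that it has full rank in $\mathbb{R}^n$, and---the main step---that it is actually discrete. The first two pieces will follow essentially for free from the hypotheses; the third is where Lemma \ref{lm1.4} is needed.

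To begin, since $g$ and $f$ are prime spots, Theorem \ref{th2}(v) makes $H^*(g)$ and $H^*(f)$ invertible, so each of $L(H^*(g))$ and $L(H^*(f))$ is already a full-rank $\phi$-cyclic lattice. Closure of the sum under $H$ is immediate: for $\alpha\in L(H^*(g))$ and $\beta\in L(H^*(f))$, the vector $H(\alpha+\beta)=H\alpha+H\beta$ again lies in $L(H^*(g))+L(H^*(f))$. Full-rankness of the sum will also be automatic, since it already contains the full-rank lattice $L(H^*(g))$.

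The heart of the argument is discreteness, which I would obtain by applying Lemma \ref{lm1.4}. It suffices to show that $L(H^*(g))\cap L(H^*(f))$ has rank $n$. The natural candidate for a full-rank sublattice of the intersection is $L(H^*(g*f))$, where $g*f$ is the $\phi$-convolutional product from Definition \ref{def2.1}. Indeed, Theorem \ref{th2}(ii) gives the identity
\[
H^*(g*f)=H^*(g)\,H^*(f)=H^*(f)\,H^*(g),
\]
so for every $a\in\mathbb{Z}^n$ the vector $H^*(g*f)\,a$ can be written both as $H^*(g)\bigl(H^*(f)\,a\bigr)$ and as $H^*(f)\bigl(H^*(g)\,a\bigr)$, exhibiting it simultaneously as an element of $L(H^*(g))$ and of $L(H^*(f))$. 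Since $g*f$ is again a prime spot (because $(g(x)f(x),\phi(x))=1$ in $\mathbb{R}[x]$), Theorem \ref{th2}(v) makes $H^*(g*f)$ invertible, and so $L(H^*(g*f))$ is full-rank; consequently the intersection has rank $n$.

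The main obstacle I anticipate is in justifying the inclusion $L(H^*(g*f))\subset L(H^*(g))\cap L(H^*(f))$: writing $H^*(g*f)\,a=H^*(g)\bigl(H^*(f)\,a\bigr)$ places this vector in $L(H^*(g))$ cleanly when $H^*(f)\,a\in\mathbb{Z}^n$, which leans on the integer-matrix structure of the ideal matrices in the ambient setting. Once that inclusion is established, Lemma \ref{lm1.4} promotes the sum to a lattice, and the $\phi$-cyclicity and full-rankness observed above finish the proof.
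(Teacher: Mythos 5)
Your overall route is the same as the paper's: reduce to Lemma \ref{lm1.4} by showing that $L(H^*(g))\cap L(H^*(f))$ contains the full-rank lattice $L(H^*(g*f))$. You have also put your finger on exactly the right soft spot --- but what you describe as an anticipated obstacle is a genuine gap, not a technicality, and it cannot be closed at this level of generality. As you observe, $L(H^*(g*f))\subset L(H^*(g))$ holds cleanly when $H^*(f)a\in\mathbb{Z}^n$ for every $a\in\mathbb{Z}^n$; since $H^*(g)$ is injective, that condition is in fact \emph{necessary}, and it holds precisely when $H^*(f)$ is an integer matrix, i.e.\ when $f\in\mathbb{Z}^n$. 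There is no ``integer-matrix structure of the ambient setting'' to lean on: prime spots in the sense of Definition \ref{def3.2} are arbitrary real vectors. Concretely, take $g=e_1$ and $f=\sqrt{2}\,e_1$, so that $g(x)=1$ and $f(x)=\sqrt{2}$ are both coprime to $\phi(x)$ in $\mathbb{R}[x]$. Then $H^*(g)=I_n$, $H^*(f)=\sqrt{2}\,I_n$, hence $L(H^*(g))=\mathbb{Z}^n$ and $L(H^*(f))=\sqrt{2}\,\mathbb{Z}^n$: the intersection is $\{0\}$, the claimed inclusion fails, and $\mathbb{Z}^n+\sqrt{2}\,\mathbb{Z}^n$ is dense in each coordinate (exactly the situation of Remark \ref{rm1}), so the sum is not a lattice at all.

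For what it is worth, the paper's own proof stumbles at the same point: it deduces the inclusion from $Rg(x)f(x)\subset Rg(x)\cap Rf(x)$, declared ``easy to see,'' but $\lambda g(x)f(x)\in Rf(x)$ for $\lambda\in R$ requires $\lambda g(x)\in R=\mathbb{Z}[x]/\phi(x)\mathbb{Z}[x]$, which again forces $g$ (respectively $f$) to be integral. So your proposal faithfully reproduces the published argument, including its flaw. The statement and both proofs become correct under the additional hypothesis $f,g\in\mathbb{Z}^n$ (the case relevant to ideal lattices, where (3.10) is indeed trivial); for general real prime spots the lemma as stated is false, and no proof strategy can rescue it without modifying the hypotheses.
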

\begin{proof} According to lemma \ref{lm1.4}, it is sufficient to show that
\begin{equation}
\text{rank}\big(L(H^*(g))\cap L(H^*(f))\big)=\text{rank}\big(L(H^*(g))\big)=n.
\end{equation}
In fact, we should prove in general
\begin{equation}
L(H^*(g)\cdot H^*(f))\subset L(H^*(g))\cap L(H^*(f)).
\end{equation}
Since $H^*(g)\cdot H^*(f)$ is an invertible matrix, then rank$\big(L(H^*(g)\cdot H^*(f))\big)=n$, and (3.8) follows immediately.

To prove (3.9), we note that
\begin{equation*}
L(H^*(g)\cdot H^*(f))=L(H^*(g*f)).
\end{equation*}
It follows that
\begin{equation*}
t^{-1}\big(L(H^*(g)\cdot H^*(f))\big)=Rg(x)f(x).
\end{equation*}
It is easy to see that
\begin{equation*}
Rg(x)f(x)\subset Rg(x)\cap Rf(x).
\end{equation*}
Therefore, we have
\begin{equation*}
L(H^*(g)\cdot H^*(f))=t(Rg(x)f(x))\subset L(H^*(g))\cap L(H^*(f)).
\end{equation*}
This is the proof of lemma \ref{lm3.4}.

\end{proof}

It is worth to note that (3.9) is true for more general case, do not need the condition of prime spot.\\

\begin{cor}\label{co3.2} Let $\beta_1,\beta_2,\dots,\beta_m$ be arbitrary $m$ vectors in $\mathbb{R}^n$, then we have
\begin{equation}
L(H^*(\beta_1) H^*(\beta_2)\cdots H^*(\beta_m))\subset L(H^*(\beta_1))\cap L(H^*(\beta_2))\cap\cdots\cap L(H^*(\beta_m)).
\end{equation}
\end{cor}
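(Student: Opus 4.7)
The plan is to mimic the proof of Lemma \ref{lm3.4}---specifically the inclusion (3.9)---and to generalize it from two to $m$ factors. The underlying idea is to transfer the problem from lattices in $\mathbb{R}^n$ to cyclic $R$-submodules of $\overline{R}$ via the ring isomorphism $t$ of Theorem \ref{tm3}, where the desired inclusion becomes a transparent consequence of the factorization $\gamma(x)=\beta_j(x)\cdot\prod_{i\neq j}\beta_i(x)$ for each $j$.

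First, I would collapse the matrix product. Applying Theorem \ref{th2}(ii) inductively yields
$$H^*(\beta_1)H^*(\beta_2)\cdots H^*(\beta_m)=H^*(\beta_1*\beta_2*\cdots*\beta_m),$$
and Lemma \ref{lm2.1} identifies the convolution on the right with the polynomial product $\gamma(x):=\beta_1(x)\beta_2(x)\cdots\beta_m(x)\in\overline{R}$. Hence the left-hand side of (3.10) equals $L(H^*(\gamma))$, and by Lemma \ref{lm3.3},
$$t^{-1}\bigl(L(H^*(\gamma))\bigr)=R\gamma(x)=R\,\beta_1(x)\beta_2(x)\cdots\beta_m(x).$$

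Second, for each fixed $j$ with $1\leqslant j\leqslant m$, I would factor out $\beta_j(x)$: any element of $R\gamma(x)$ has the form $\lambda(x)\gamma(x)=\bigl[\lambda(x)\prod_{i\neq j}\beta_i(x)\bigr]\beta_j(x)$, which exhibits it as a member of $R\beta_j(x)$. This gives $R\gamma(x)\subset R\beta_j(x)$, and applying $t$ together with Lemma \ref{lm3.3} translates this into the lattice inclusion $L(H^*(\gamma))\subset L(H^*(\beta_j))$. Intersecting over all $j=1,2,\dots,m$ yields (3.10).

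The one point that demands care---and what would be the principal obstacle if one attempted to weaken the hypothesis---is that the auxiliary coefficient $\lambda(x)\prod_{i\neq j}\beta_i(x)$ must actually lie in $R$ (not merely in $\overline{R}$) in order to conclude $\lambda(x)\gamma(x)\in R\beta_j(x)$. This is the same implicit ingredient already used at the $m=2$ level in Lemma \ref{lm3.4}, and once granted, the argument reduces to repeated application of results already established (Theorem \ref{th2}, Lemma \ref{lm2.1}, and Lemma \ref{lm3.3}) with no further machinery needed.
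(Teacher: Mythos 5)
Your argument is, step for step, the paper's own proof of Corollary \ref{co3.2}: collapse the matrix product to $H^*(\beta_1*\cdots*\beta_m)$ via Theorem \ref{th2}(ii), pass to $R\gamma(x)$ with $\gamma(x)=\beta_1(x)\cdots\beta_m(x)$ via $t$ and Lemma \ref{lm3.3}, and then invoke the module inclusion $R\gamma(x)\subset R\beta_j(x)$ for each $j$. The one point you flag --- that the cofactor $\lambda(x)\prod_{i\neq j}\beta_i(x)$ must lie in $R$ rather than merely in $\overline{R}$ --- is precisely the point the paper passes over in silence, and you are right to be uneasy: it cannot simply be ``granted''. By the definition (3.5), $R\alpha=\{\lambda\alpha\mid\lambda\in R\}$ with $\lambda(x)$ ranging over $\mathbb{Z}[x]/\phi(x)\mathbb{Z}[x]$, so membership of $\lambda(x)\gamma(x)$ in $R\beta_j(x)$ requires the cofactor to be an \emph{integer} polynomial modulo $\phi(x)$. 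For arbitrary real $\beta_i$ this fails, and with it the statement itself.

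Concretely, take $n=2$, $\phi(x)=x^2-1$, $\beta_1=(\sqrt2,0)'$ and $\beta_2=(\sqrt3,0)'$. Then $H^*(\beta_1)=\sqrt2\,I_2$ and $H^*(\beta_2)=\sqrt3\,I_2$, so $L\bigl(H^*(\beta_1)H^*(\beta_2)\bigr)=\sqrt6\,\mathbb{Z}^2$, which is not contained in $L(H^*(\beta_1))=\sqrt2\,\mathbb{Z}^2$ because $\sqrt3\notin\mathbb{Z}$. Hence (3.10) is false for arbitrary real vectors; it holds when the $\beta_i$ are integer vectors (the case the paper dismisses as trivial), or more generally whenever, for each $j$, the matrix $\prod_{i\neq j}H^*(\beta_i)$ maps $\mathbb{Z}^n$ into $\mathbb{Z}^n$. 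So your write-up is faithful to the paper's route, but the hypothesis you identify as ``demanding care'' is a genuinely missing hypothesis, not a removable technicality, and you should not wave it through; note that the same defect already infects the unconditional use of (3.9) in Lemma \ref{lm3.4}, since $(\sqrt2,0)'$ and $(\sqrt3,0)'$ are prime spots yet $\sqrt2\,\mathbb{Z}^2\cap\sqrt3\,\mathbb{Z}^2=\{0\}$.
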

\begin{proof} If $\beta_1,\beta_2,\dots,\beta_m$ are integer vectors, then (3.10) is trivial. For the general case, we write
\begin{equation*}
L(H^*(\beta_1)\cdot H^*(\beta_2)\cdots H^*(\beta_m))=L(H^*(\beta_1 *\beta_2 *\cdots *\beta_m)),
\end{equation*}
where $\beta_1 *\beta_2 *\cdots *\beta_m$ is the $\phi$-convolutional product, then
\begin{equation*}
t^{-1}\big(L(H^*(\beta_1)\cdots H^*(\beta_m))\big)=R \beta_1(x)\beta_2(x)\cdots \beta_m(x).
\end{equation*}
Since
\begin{equation*}
R \beta_1(x)\beta_2(x)\cdots \beta_m(x)\subset R\beta_1(x)\cap R\beta_2(x)\cap\cdots\cap R\beta_m(x).
\end{equation*}
It follows that
\begin{equation*}
L(H^*(\beta_1) H^*(\beta_2)\cdots H^*(\beta_m))\subset L(H^*(\beta_1))\cap L(H^*(\beta_2))\cap\cdots\cap L(H^*(\beta_m)).
\end{equation*}
We have this corollary.

\end{proof}

By lemma \ref{lm3.4}, we also have the following assertion.

\begin{cor}\label{co3.3} Let $\beta_1,\beta_2,\dots,\beta_m$ be $m$ prime spots of $\mathbb{R}^n$, then $L(H^*(\beta_1))+L(H^*(\beta_2))+\cdots+L(H^*(\beta_m))$ is a full-rank $\phi$-cyclic lattice.
\end{cor}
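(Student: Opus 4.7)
The plan is to combine Corollary 1.3 (sum of many lattices is again a lattice, under a rank hypothesis on their intersection) with Corollary 3.2 (a product of ideal matrices generates a lattice inside the intersection of the factor lattices), and to check the $\phi$-cyclic condition directly on the sum.

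First I would verify that each summand $L(H^*(\beta_i))$ is itself a full-rank $\phi$-cyclic lattice. Full-rank follows because $\beta_i$ is a prime spot, so by property (v) of Theorem \ref{th2} the matrix $H^*(\beta_i)$ is invertible, hence $\mathrm{rank}(L(H^*(\beta_i))) = n$. That each $L(H^*(\beta_i))$ is $\phi$-cyclic is already noted in the paragraph preceding (3.6), but one sees it instantly because the columns $\beta_i, H\beta_i, \dots, H^{n-1}\beta_i$ are permuted/shifted under $H$ modulo $\phi(x)$ (equivalently, via the ring isomorphism of Theorem \ref{tm3}, $L(H^*(\beta_i))$ corresponds to the cyclic $R$-module $R\beta_i(x)$, which is an ideal and hence closed under multiplication by $x$).

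Next I would check closedness of the sum under $H$. If $\alpha = \alpha_1 + \alpha_2 + \cdots + \alpha_m$ with $\alpha_i \in L(H^*(\beta_i))$, then $H\alpha = H\alpha_1 + \cdots + H\alpha_m$, and each $H\alpha_i \in L(H^*(\beta_i))$ by the previous step; hence $H\alpha$ lies in the sum. So any sum of $\phi$-cyclic lattices is $\phi$-cyclic provided it is a lattice at all.

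The main point, then, is to verify that the sum is actually a lattice (rather than merely a discrete set), which is precisely the hypothesis of Corollary \ref{co1.3}. I need to show
\begin{equation*}
\mathrm{rank}\bigl(L(H^*(\beta_1)) \cap L(H^*(\beta_2)) \cap \cdots \cap L(H^*(\beta_m))\bigr) = n.
\end{equation*}
By Corollary \ref{co3.2},
\begin{equation*}
L\bigl(H^*(\beta_1) H^*(\beta_2) \cdots H^*(\beta_m)\bigr) \subset \bigcap_{i=1}^{m} L(H^*(\beta_i)).
\end{equation*}
Since each $\beta_i$ is a prime spot, $(\beta_i(x),\phi(x))=1$ in $\mathbb{R}[x]$, and a finite product of elements coprime to $\phi(x)$ is again coprime to $\phi(x)$; equivalently, $\det H^*(\beta_i) \neq 0$ for each $i$ by (iv) of Theorem \ref{th2}, so the product $H^*(\beta_1)\cdots H^*(\beta_m)$ is invertible. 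Therefore the lattice on the left has rank $n$, forcing the intersection on the right to have rank $n$ as well. This matches $\mathrm{rank}(L(H^*(\beta_j))) = n$ for each $j$, so Corollary \ref{co1.3} applies and the sum is a lattice of $\mathbb{R}^n$. Combined with the second paragraph, it is a full-rank $\phi$-cyclic lattice, as required.

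There is no real obstacle here: the only subtle point is to notice that coprimality to $\phi(x)$ is preserved under the $\phi$-convolutional product, so $\beta_1 * \beta_2 * \cdots * \beta_m$ is itself a prime spot, which is what lets Corollary \ref{co3.2} feed a full-rank sublattice into the intersection.
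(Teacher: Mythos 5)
Your proposal is correct and follows essentially the same route as the paper: the paper's one-line proof invokes Corollary \ref{co1.3}, whose rank hypothesis is supplied exactly as you do it, via Corollary \ref{co3.2} and the invertibility of the product $H^*(\beta_1)\cdots H^*(\beta_m)$ (the $m$-fold version of the argument in Lemma \ref{lm3.4}). You have merely written out the details, including the $\phi$-cyclicity of the sum, that the paper leaves implicit.
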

\begin{proof} It follows immediately from corollary \ref{co1.3}.
\end{proof}

Our main result in this paper is to establish the following one to one correspondence between $\phi$-cyclic lattices in $\mathbb{R}^n$ and finitely generated $R$-modules in $\overline{R}$.

\begin{thm}\label{th4} Let $\wedge=R\alpha_1(x)+R\alpha_2(x)+\cdots+R\alpha_m(x)$ be a finitely generated $R$-module in $\overline{R}$, then $t(\wedge)$ is a $\phi$-cyclic lattice in $\mathbb{R}^n$. Conversely, if $L\subset \mathbb{R}^n$ is a $\phi$-cyclic lattice in $\mathbb{R}^n$, then $t^{-1}(L)$ is a finitely generated $R$-module in $\overline{R}$, that is a one to one correspondence.
\end{thm}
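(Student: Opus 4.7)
My plan is to prove both directions by reducing to the cyclic case already handled by Lemma~\ref{lm3.3} and exploiting the ring isomorphism $t$ (Theorem~\ref{tm3}), together with the identity $t(xg(x))=Hg$ from (2.11), which intertwines multiplication by $x$ in $\overline{R}$ with the action of the rotation matrix $H$ on $\mathbb{R}^n$.

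For the forward direction, given $\wedge=R\alpha_1(x)+\cdots+R\alpha_m(x)$, I would first apply Lemma~\ref{lm3.3} to each summand to write
\[
t(\wedge)\;=\;L(H^*(\alpha_1))+L(H^*(\alpha_2))+\cdots+L(H^*(\alpha_m)).
\]
The $\phi$-cyclicity of $t(\wedge)$ is then immediate, because each $L(H^*(\alpha_i))$ is $H$-invariant and $H$-invariance is preserved under finite sums. The nontrivial step is to show $t(\wedge)$ is a discrete additive subgroup of $\mathbb{R}^n$, so that Theorem~\ref{th1} identifies it as a lattice. For this I would invoke Corollary~\ref{co1.3}, whose hypothesis requires that the intersection $\bigcap_{i}L(H^*(\alpha_i))$ has the same rank as one of the summands. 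That hypothesis is supplied by Corollary~\ref{co3.2}, which gives the inclusion
\[
L\bigl(H^*(\alpha_1)\cdots H^*(\alpha_m)\bigr)\;\subseteq\;\bigcap_{i=1}^{m}L(H^*(\alpha_i)),
\]
so once the common sublattice on the left is of full rank (the setting of interest here), the required rank condition transfers to the right-hand side and Corollary~\ref{co1.3} yields that $t(\wedge)$ is indeed a lattice.

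For the converse direction, let $L\subset\mathbb{R}^n$ be a $\phi$-cyclic lattice. By Theorem~\ref{th1}, $L$ admits a $\mathbb{Z}$-basis $\beta_1,\ldots,\beta_k$, so $t^{-1}(L)$ is the $\mathbb{Z}$-span of $\{t^{-1}(\beta_j)\}_{j=1}^{k}$ inside $\overline{R}$. Closure of $t^{-1}(L)$ under multiplication by $x$ is exactly the $\phi$-cyclicity of $L$ via (2.11), and iterating gives closure under multiplication by every element of $R=\mathbb{Z}[x]/\phi(x)\mathbb{Z}[x]$; hence $t^{-1}(L)$ is an $R$-submodule of $\overline{R}$. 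Since the preimages of the $\beta_j$ already generate it over $\mathbb{Z}\subset R$, the module $t^{-1}(L)$ is \emph{a fortiori} finitely generated as an $R$-module, and one checks $t(t^{-1}(L))=L$ and $t^{-1}(t(\wedge))=\wedge$ to get the bijection.

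The step I expect to be the main obstacle is the discreteness assertion in the forward direction. A sum of lattices in $\mathbb{R}^n$ need not be a lattice, as Remark~\ref{rm1} already warns (the group $\mathbb{Z}+\sqrt{2}\,\mathbb{Z}$ is dense in $\mathbb{R}$). The proof must therefore lean essentially on Corollary~\ref{co3.2} to produce a common full-rank sublattice inside $\bigcap_i L(H^*(\alpha_i))$, which is precisely what makes Corollary~\ref{co1.3} applicable and closes the argument.
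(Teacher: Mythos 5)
The converse direction of your proposal is correct and coincides with the paper's argument: a $\mathbb{Z}$-basis $\beta_1,\dots,\beta_m$ of $L$ gives $t^{-1}(L)=R\beta_1(x)+\cdots+R\beta_m(x)$ via (3.7), which is finitely generated. The forward direction also starts the same way (Lemma \ref{lm3.3} gives $t(\wedge)=\sum_i L(H^*(\alpha_i))$, and $H$-invariance of the sum is immediate). But your discreteness argument has a genuine gap exactly at the point you flag as the main obstacle. You want to feed Corollary \ref{co3.2} into Corollary \ref{co1.3}, and you justify the rank hypothesis by saying the common sublattice $L\bigl(H^*(\alpha_1)\cdots H^*(\alpha_m)\bigr)$ is of full rank ``in the setting of interest here.'' That is not the setting of the theorem: the generators $\alpha_i(x)$ are arbitrary elements of $\overline{R}$, and $H^*(\alpha_1)\cdots H^*(\alpha_m)$ is invertible only when every $\alpha_i(x)$ is coprime to $\phi(x)$, i.e.\ every $\alpha_i$ is a prime spot (Theorem \ref{th2}(v)). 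If $\phi$ is reducible one can have $\alpha_1(x)\alpha_2(x)\equiv 0 \pmod{\phi(x)}$, in which case the product matrix is singular (even zero), Corollary \ref{co3.2} only exhibits a low-rank (possibly trivial) common sublattice, and the hypothesis of Corollary \ref{co1.3} cannot be verified this way. So your argument proves the theorem only for generators that are all prime spots.

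The paper closes precisely this gap with a step your proposal does not contain: a ``surgery'' that embeds $t(\wedge)$ into a larger full-rank lattice rather than analyzing the sum $\sum_i L(H^*(\alpha_i))$ directly. One sets $d_i(x)=(\alpha_i(x),\phi(x))$ and $\beta_i(x)=\alpha_i(x)/d_i(x)$; because $\phi$ has no repeated roots, $(\beta_i(x),\phi(x))=1$, so each $\beta_i$ is a prime spot, and $R\alpha_i(x)\subset R\beta_i(x)$ gives
\begin{equation*}
t(\wedge)\subset L(H^*(\beta_1))+\cdots+L(H^*(\beta_m)),
\end{equation*}
where the right-hand side is a full-rank $\phi$-cyclic lattice by Corollary \ref{co3.3} (this is where your intersection-of-full-rank-sublattices mechanism legitimately applies, since now all generators are prime spots). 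Corollary \ref{co1.1} --- an additive subgroup of a lattice is again a lattice --- then shows $t(\wedge)$ itself is a lattice. Without this reduction to prime spots, or some substitute for it, your forward direction does not go through for general finitely generated $R$-modules.
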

\begin{proof}
 If $\wedge$ is a finitely generated $R$-module, by lemma \ref{lm3.3}, we have
\begin{equation*}
t(\wedge)=t(R\alpha_1(x)+\cdots+R\alpha_m(x))=L(H^*(\alpha_1))+L(H^*(\alpha_2))+\cdots+L(H^*(\alpha_m)).
\end{equation*}
The main difficult is to show that $t(\wedge)$ is a lattice of $\mathbb{R}^n$, we require a surgery to embed $t(\wedge)$ into a full-rank lattice. To do this, let $(\alpha_i(x),\phi(x))=d_i(x)$, $d_i(x)\in \mathbb{Z}[x]$, and $\beta_i(x)=\alpha_i(x)/d_i(x)$, $1\leqslant i\leqslant m$. Since $\phi(x)$ has no multiple roots by assumption, then $(\beta_i(x),\phi(x))=1$ in $\mathbb{R}[x]$. In other words, each $t(\beta_i(x))=\beta_i$ is a prime spot. It is easy to verify $R\alpha_i(x)\subset R\beta_i(x)\ (1\leqslant i\leqslant m)$, thus we have
\begin{equation*}
t(\wedge)\subset L(H^*(\beta_1))+L(H^*(\beta_2))+\cdots+L(H^*(\beta_m)).
\end{equation*}
By corollary \ref{co3.3} and corollary \ref{co1.1}, we have $t(\wedge)$ is $\phi$-cyclic lattice. Conversely, if $L\subset \mathbb{R}^n$ is a $\phi$-cyclic lattice of $\mathbb{R}^n$, and $L=L(\beta_1,\beta_2,\dots,\beta_m)$, by (3.7), we have
\begin{equation*}
t^{-1}(L)=R\beta_1(x)+R\beta_2(x)+\cdots+R\beta_m(x),
\end{equation*}
which is a finitely generated $R$-module in $\overline{R}$. We complete the proof of Theorem \ref{th4}.

\end{proof}

As we introduced in abstract, since $R$ is a Noether ring, then $I\subset R$ is an ideal if and only if $I$ is a finitely generated $R$-module. On the other hand, if $I\subset R$ is an ideal, then $t(I)\subset \mathbb{Z}^n$ is a discrete subgroup of $\mathbb{Z}^n$, thus $t(I)$ is a lattice, we define\\

\begin{defn}\label{def3.3}
 Let $I\subset R$ be an ideal, $t(I)$ is called the $\phi$-ideal lattice.
\end{defn}
Ideal lattice was first appeared in \cite{L1} (see Definition 3.1 of \cite{L1}). As a direct consequences of Theorem \ref{th4}, we have

\begin{cor}\label{co3.4}
Let $L\subset \mathbb{R}^n$ be a subset, then $L$ is a $\phi$-cyclic lattice if and only if
\begin{equation*}
L=L(H^*(\beta_1))+L(H^*(\beta_2))+\cdots+L(H^*(\beta_m)),
\end{equation*}
where $\beta_i\in \mathbb{R}^n$ and $m\leqslant n$. Furthermore, $L$ is a $\phi$-ideal lattice if and only if every $\beta_i\in \mathbb{Z}^n$, $1\leqslant i\leqslant m$.
\end{cor}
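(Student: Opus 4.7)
Both equivalences follow directly from Theorem \ref{th4} together with Lemma \ref{lm3.3}, by translating through the ring isomorphism $t$ between $\overline{R}$ and $\mathbb{R}^n$. The argument is essentially a dictionary between the $R$-module generator description on the polynomial side and the ideal-matrix sum description on the lattice side; the only subtlety is the bookkeeping for the bound $m\leq n$.

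For the cyclic lattice equivalence, in the forward direction I would take a $\phi$-cyclic lattice $L$, invoke Theorem \ref{th4} to write $t^{-1}(L)=R\alpha_1(x)+\cdots+R\alpha_m(x)$ as a finitely generated $R$-module, and then apply $t$. Since $t$ is an additive bijection and by Lemma \ref{lm3.3} carries each cyclic $R$-submodule $R\alpha_i(x)$ to $L(H^*(\alpha_i))$, this yields $L=L(H^*(\alpha_1))+\cdots+L(H^*(\alpha_m))$. To enforce $m\leq n$ I would choose the $\alpha_i$ to be a $\mathbb{Z}$-basis of $L$: by Lemma \ref{lm1.1} we have $\mathrm{rank}_{\mathbb{Z}}(L)\leq n$, and since $L$ is $\phi$-cyclic every $H^k\alpha_i$ lies in $L$, so $L(H^*(\alpha_i))\subset L$; conversely $\alpha_i$ is the first column of $H^*(\alpha_i)$, so the sum recovers $L$. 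The converse is immediate: given any decomposition of the claimed form, $t^{-1}(L)=R\beta_1(x)+\cdots+R\beta_m(x)$ is a finitely generated $R$-module in $\overline{R}$, and the other half of Theorem \ref{th4} shows $L$ is $\phi$-cyclic.

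For the $\phi$-ideal lattice equivalence the extra ingredient is that $R=\mathbb{Z}[x]/\phi(x)\mathbb{Z}[x]$ is Noetherian, being a quotient of $\mathbb{Z}[x]$ (which is Noetherian by Hilbert's basis theorem). In the forward direction, if $L=t(I)$ for an ideal $I\subset R$, Noetherianness gives $I=R\beta_1(x)+\cdots+R\beta_m(x)$ with each $\beta_i(x)\in R$, i.e.\ $\beta_i\in\mathbb{Z}^n$, after which the previous argument produces the required decomposition with integer inputs. In the reverse direction, if every $\beta_i\in\mathbb{Z}^n$ then $\beta_i(x)\in R$, and $t^{-1}(L)=R\beta_1(x)+\cdots+R\beta_m(x)\subset R$ is both an $R$-submodule and a subset of $R$, hence an ideal of $R$, so $L$ is a $\phi$-ideal lattice in the sense of Definition \ref{def3.3}. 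The main potential obstacle is the bound $m\leq n$, which is invisible on the $R$-module side but forced on the lattice side by dimension; the bridge is the observation above that a $\mathbb{Z}$-basis of a $\phi$-cyclic lattice automatically serves as a set of $R$-module generators, because closure under $H$ folds the $R$-action back into $\mathbb{Z}$-linear combinations of the basis.
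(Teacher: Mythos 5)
Your proposal is correct and follows essentially the same route as the paper, which states the corollary as a direct consequence of Theorem \ref{th4}, relying on the decomposition (3.7) of a $\phi$-cyclic lattice into minimal $\phi$-cyclic lattices over a $\mathbb{Z}$-basis (your $m\leq n$ bookkeeping) and on the Noetherian remark preceding Definition \ref{def3.3} for the ideal-lattice half. Your elaboration of why a $\mathbb{Z}$-basis also serves as a set of $R$-module generators, and why integrality of the $\beta_i$ characterizes the ideal case, simply makes explicit what the paper leaves implicit.
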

\begin{cor}\label{co3.5} Suppose that $\phi(x)$ is an irreducible polynomial in $\mathbb{Z}[x]$, then any non-zero ideal $I$ of $R$ defines a full-rank $\phi$-ideal lattice $t(I)\subset \mathbb{Z}^n$.
\end{cor}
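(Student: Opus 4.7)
My plan is to exhibit a full-rank sublattice inside $t(I)$, using a single nonzero element of $I$ as a prime spot.

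First, I would observe that since $R = \mathbb{Z}[x]/\phi(x)\mathbb{Z}[x]$ is Noetherian (being a finitely generated $\mathbb{Z}$-algebra that is free of rank $n$ as a $\mathbb{Z}$-module), the ideal $I$ is a finitely generated $R$-submodule of $\overline{R}$. Applying Theorem \ref{th4} to $I$, we know that $t(I)$ is a $\phi$-cyclic lattice, and the containment $I \subset R$ together with the fact that $t$ identifies $R$ with $\mathbb{Z}^n$ forces $t(I)\subset \mathbb{Z}^n$. So the only thing left to verify is that $\text{rank}(t(I)) = n$.

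Next, I would pick any nonzero $\alpha \in I$ and show that its representative polynomial $\alpha(x) \in \overline{R}$ is coprime to $\phi(x)$ in $\mathbb{R}[x]$, which makes $\alpha$ a prime spot of $\mathbb{R}^n$ in the sense of Definition \ref{def3.2}. The argument goes: since $\phi(x)$ is monic and irreducible in $\mathbb{Z}[x]$, it is irreducible in $\mathbb{Q}[x]$; hence any common divisor of $\alpha(x)$ and $\phi(x)$ in $\mathbb{Q}[x]$ is either a unit or an associate of $\phi(x)$. The latter is impossible because $\phi(x)\mid \alpha(x)$ would force $\alpha \equiv 0$ in $R$, contrary to our choice. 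Therefore $(\alpha(x),\phi(x))=1$ in $\mathbb{Q}[x]$, and a Bezout identity in $\mathbb{Q}[x]$ lifts at once to one in $\mathbb{R}[x]$, giving $(\alpha(x),\phi(x))=1$ in $\mathbb{R}[x]$ (this is exactly the style of argument used in Lemma \ref{lm2.3}).

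Once $\alpha$ is known to be a prime spot, part (v) of Theorem \ref{th2} guarantees that $H^*(\alpha)$ is invertible, so the minimal $\phi$-cyclic lattice $L(H^*(\alpha))$ is full-rank in $\mathbb{R}^n$. On the module side, $R\alpha(x)\subset I$ because $I$ is an ideal of $R$, and Lemma \ref{lm3.3} translates this to $L(H^*(\alpha)) = t(R\alpha(x)) \subset t(I)$. Therefore $n = \text{rank}(L(H^*(\alpha))) \le \text{rank}(t(I)) \le n$, which yields the desired full-rank conclusion.

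The only subtle point, and the one I expect to need the most care, is the passage from irreducibility of $\phi$ in $\mathbb{Z}[x]$ to coprimality of any nonzero $\alpha(x)$ with $\phi(x)$ in $\mathbb{R}[x]$: one must not confuse this with irreducibility in $\mathbb{R}[x]$, which can genuinely fail (e.g.\ $\phi(x)=x^4+1$). The correct route is through $\mathbb{Q}[x]$, where $\phi(x)$ is irreducible by Gauss's lemma, so the gcd dichotomy gives $1$, and this Bezout relation then transfers to $\mathbb{R}[x]$. Everything else is a direct assembly of the earlier machinery.
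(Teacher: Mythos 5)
Your proof is correct, and its core mechanism is the same as the paper's: irreducibility of $\phi(x)$ forces nonzero elements of $I$ to be prime spots, which produces a full-rank minimal $\phi$-cyclic sublattice inside $t(I)$. The only structural difference is that the paper writes $I=R\alpha_1(x)+\cdots+R\alpha_m(x)$, asserts that \emph{every} generator satisfies $(\alpha_i(x),\phi(x))=1$, and then invokes Corollary \ref{co3.3} to get full rank of the sum $L(H^*(\alpha_1))+\cdots+L(H^*(\alpha_m))$, whereas you exhibit a single nonzero $\alpha\in I$, note $R\alpha(x)\subset I$, and use monotonicity of rank under the containment $L(H^*(\alpha))\subset t(I)$. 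Your route is slightly more economical (one element suffices, and you avoid re-invoking Corollary \ref{co3.3}), and you supply the justification the paper leaves implicit: that $(\alpha(x),\phi(x))=1$ holds in $\mathbb{R}[x]$ because the representative $\alpha(x)$ has degree less than $n$ and is nonzero, so $\phi\nmid\alpha$, irreducibility over $\mathbb{Q}$ then gives a Bezout identity, and Lemma \ref{lm2.3} transfers it to $\mathbb{R}[x]$. Your caution about not conflating irreducibility in $\mathbb{Z}[x]$ with irreducibility in $\mathbb{R}[x]$ is well placed and is exactly the point the paper's one-line assertion glosses over.
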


\begin{proof} Let $I\subset R$ be a non-zero ideal, then we have $I=R\alpha_1(x)+R\alpha_2(x)+\cdots+R\alpha_m(x)$, where $\alpha_i(x)\in R$ and $(\alpha_i(x),\phi(x))=1$. It follows that
\begin{equation*}
t(I)=L(H^*(\alpha_1))+L(H^*(\alpha_2))+\cdots+L(H^*(\alpha_m)).
\end{equation*}
Since each $\alpha_i$ is a prime spot, we have rank$(t(I))=n$ by corollary \ref{co3.3}, and the corollary follows at once.

\end{proof}

According to Definition 3.1 of \cite{L1}, we have proved that any an ideal of $R$ corresponding to a $\phi$-ideal lattice, which just is a $\phi$-cyclic integer lattice under the more general rotation matrix $H=H_{\phi}$. Cyclic lattice and ideal lattice were introduced in \cite{L1} and \cite{D2} respectively to improve the space complexity of lattice based cryptosystems. Ideal lattices allow to represent a lattice using only two polynomials. Using such lattices, class lattice based cryptosystems can diminish their space complexity from $O(n^2)$ to $O(n)$. Ideal lattices also allow to accelerate computations using the polynomial structure. The original structure of Micciancio's matrices uses the ordinary circulant matrices and allows for an interpretation in terms of arithmetic in polynomial ring $\mathbb{Z}[x]/<x^n-1>$. Lyubashevsky and Micciancio \cite{L1} latter suggested to change the ring to $\mathbb{Z}[x]/<\phi(x)>$ with an irreducible $\phi(x)$ over $\mathbb{Z}[x]$. Our results here suggest to change the ring to $\mathbb{Z}[x]/<\phi(x)>$ with any a polynomial $\phi(x)$. There are many works are subsequent to Micciancio \cite{D2} and Lyubashevsky and Micciancio \cite{L1}, such as [7], \cite{D4}, \cite{C}, \cite{T}, \cite{p2}, [17] and \cite{d2}.

\begin{example}\label{ex3.1}
It is interesting to find some examples of $\phi$-cyclic lattices in an algebraic number field $K$. Let $Q$ be rational number field, without loss of generality, an algebraic number field $K$ of degree $n$ is just $K=Q(w)$, where $w=w_i$ is a root of $\phi(x)$. If all $Q(w_i)\subset \mathbb{R}\ (1\leqslant i\leqslant n)$, then $K$ is called a totally real algebraic number field. Let $O_K$ be the ring of algebraic integers of $K$, and $I\subset O_K$ be an ideal, $I\neq 0$. Since there is an integral basis $\{\alpha_1,\alpha_2,\dots,\alpha_n\}\subset I$ such that
\begin{equation*}
I=\mathbb{Z}\alpha_1+\mathbb{Z}\alpha_2+\cdots+\mathbb{Z}\alpha_n.
\end{equation*}
We may regard every ideal of $O_K$ as a lattice in $Q^n$, our assertion is that every nonzero ideal of $O_K$ is corresponding to a full-rank $\phi$-cyclic lattice of $Q^n$. To see this example, let
\begin{equation*}
Q[w]=\{\sum_{i=0}^{n-1} a_i w^i\ |\ a_i\in Q\}.
\end{equation*}
It is known that $K=Q[w]$, thus every $\alpha\in K$ corresponds to a vector $\overline{\alpha}\in Q^n$ by
\begin{equation*}
\alpha=\sum_{i=0}^{n-1} a_i w^i \xrightarrow{\quad \tau\quad} \overline{\alpha}=\begin{pmatrix}
     a_0 \\
     a_1 \\
     \vdots \\
     a_{n-1}
\end{pmatrix}\in \mathbb{Q}^n.
\end{equation*}
If $I\subset O_K$ is an ideal of $O_K$ and $I=\mathbb{Z}\alpha_1+\mathbb{Z}\alpha_2+\cdots+\mathbb{Z}\alpha_n$, let $B=[\overline{\alpha_1},\overline{\alpha_2},\dots,\overline{\alpha_n}]\in Q^{n\times n}$, which is full-rank matrix. We have $\tau(I)=L(B)$ is a full-rank lattice. It remains to show that $\tau(I)$ is a $\phi$-cyclic lattice, we only prove that if $\alpha \in I\Rightarrow H\overline{\alpha}\in \tau(I)$. Suppose that $\alpha\in I$, then $w\alpha \in I$. It is easy to verify that $\tau(w)=e_2$ (see (2.7)) and
\begin{equation*}
\tau(w\alpha)=\tau(w)*\tau(\alpha)=H\overline{\alpha}\in \tau(I).
\end{equation*}
This means that $\tau(I)$ is a $\phi$-cyclic lattice of $Q^n$, which is a full-rank lattice.
\end{example}

\section{Smoothing Parameter}

As application of the algebraic structure of $\phi$-cyclic lattice, we show that an explicit upper bound of the smoothing parameter for the $\phi$-cyclic lattices. Firstly, we introduce some basic notations.

A Gauss function $\rho_{s,c}(x)$ in $\mathbb{R}^n$ is given by
\begin{equation}
\rho_{s,c}(x)=e^{-\pi |x-c|^2/s^2},
\end{equation}
where $x\in \mathbb{R}^n$, $c\in \mathbb{R}^n$ and $s>0$ is a positive real number. $\rho_{s,c}(x)$ is called the Gauss function around original point $c$ with parameter $s$. It is easy to see that
\begin{equation*}
\int_{\mathbb{R}^n} \rho_{s,c}(x)\mathrm{d} x=s^n.
\end{equation*}
Thus we may define a probability density function $D_{s,c}(x)$ by
\begin{equation}
D_{s,c}(x)=\rho_{s,c}(x)/ \int_{\mathbb{R}^n} \rho_{s,c}(x)\mathrm{d} x=\rho_{s,c}(x)/s^n.
\end{equation}
Suppose $L\subset \mathbb{R}^n$ is a lattice, let
\begin{equation}
D_{s,c}(L)=\sum\limits_{x\in L} D_{s,c}(x),\ \rho_{s,c}(L)=\sum\limits_{x\in L} \rho_{s,c}(x).
\end{equation}
The discrete Gauss distribution over $L$ is a probability distribution $D_{L,s,c}$ over $L$ given by
\begin{equation}
D_{L,s,c}(x)=\frac{D_{s,c}(x)}{D_{s,c}(L)}=\frac{\rho_{s,c}(x)}{\rho_{s,c}(L)}.
\end{equation}
If $c=0$ is the zero vector of $\mathbb{R}^n$, we write $\rho_{s,0}(x)=\rho_{s}(x)$, $\rho_{s,0}(L)=\rho_{s}(L)$, $D_{s,0}(x)=D_{s}(x)$ and $D_{s,0}(L)=D_{s}(L)$. Suppose that $L$ is a full-rank lattice and $L^*$ is its dual lattice, we define the smoothing parameter $\eta_{\epsilon}(L)$ of $L$ to be the smallest $s$ such that $\rho_{1/s}(L^*)\leqslant 1+\epsilon$, more precisely,
\begin{equation}
\eta_{\epsilon}(L)=\min\{s:\ s>0\ \text{and}\ \rho_{1/s}(L^*)\leqslant 1+\epsilon\},
\end{equation}
where $\epsilon>0$ is a positive number. Notice that $\rho_{1/s}(L^*)$ is a continuous and strictly decreasing function of $s$, thus the smoothing parameter $\eta_{\epsilon}(L)$ is a continuous and strictly decreasing function of $\epsilon$.

Let $L=L(\beta_1,\beta_2,\dots,\beta_n)\subset \mathbb{R}^n$ be a full-rank lattice with a basis $\beta_1,\beta_2,\dots,\beta_n$, the fundamental region $P(L)$ is given by
\begin{equation}
P(L)=\{\sum\limits_{i=1}^{n} a_i \beta_i | 0\leqslant a_i< 1,\ 1\leqslant i\leqslant n\}.
\end{equation}
Suppose that $X$ and $Y$ are two discrete random variables on $\mathbb{R}^n$, the statistical distance between $X$ and $Y$ over $L$ is defined by
\begin{equation}
\triangle(X,Y)=\frac{1}{2} \sum\limits_{a\in L}|P\{X=a\}-P\{Y=a\}|.
\end{equation}
If $X$ and $Y$ are continuous random variables with probability density function $T_1$ and $T_2$ respectively, then $\triangle (X,Y)$ is defined by
\begin{equation}
\triangle (X,Y)=\frac{1}{2}\int_{\mathbb{R}^n} |T_1(z)-T_2(z)| \mathrm{d} z.
\end{equation}

The smoothing parameter was introduced by Micciancio and Regev in \cite{D3}, which plays an important role in the statistical information of lattices. An important property of smoothing parameter is for any lattice $L=L(B)$ and any $\epsilon>0$, the statistical distance between $D_s$ mod $L$ and the uniform distribution over the fundamental region $P(L)$ is at most $\frac{1}{2}(\rho_{1/s}(L(B)^*))$. More precisely, for any $\epsilon>0$ and any $s\geqslant \eta_{\epsilon} (L(B))$, the statistical distance is at most $\frac{1}{2}\epsilon$, namely
\begin{equation}
\triangle \big(D_{s,c}\ \text{mod}\ L,\ U(P(L)) \big)\leqslant \frac{\varepsilon}{2}.
\end{equation}

\begin{lem}\label{lm4.1}Let $L\subset \mathbb{R}^n$ be a full-rank lattice, we have
\begin{equation}
\eta_{2^{-n}} (L)\leqslant \sqrt{n}/\lambda_1 (L^*),
\end{equation}
where $L^*$ is the dual lattice of $L$, and $\lambda_1 (L^*)$ is the minimum distance of $L^*$.
\end{lem}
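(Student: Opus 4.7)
The plan is to unpack the definition of the smoothing parameter in (4.5) and reduce the claim to verifying that the candidate value $s := \sqrt{n}/\lambda_1(L^*)$ already satisfies $\rho_{1/s}(L^*) \leq 1 + 2^{-n}$, equivalently
\begin{equation*}
\rho_{1/s}(L^*\setminus\{0\}) \;=\; \sum_{x \in L^*\setminus\{0\}} e^{-\pi s^2 |x|^2} \;\leq\; 2^{-n}.
\end{equation*}
The substitution $y = x/\lambda_1(L^*)$ rescales $L^*$ to a lattice $\Lambda$ of minimum distance exactly $1$, and the identity $s^2 \lambda_1(L^*)^2 = n$ turns the target into the scale-free inequality
\begin{equation*}
\sum_{y \in \Lambda\setminus\{0\}} e^{-\pi n |y|^2} \;\leq\; 2^{-n}.
\end{equation*}

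The main ingredient is a counting estimate for $|\Lambda \cap B(0,R)|$. Since closed balls of radius $1/2$ around distinct points of $\Lambda$ are pairwise disjoint and all lie inside $B(0, R+1/2)$ whenever their centers lie in $B(0,R)$, a volume comparison yields $|\Lambda \cap B(0,R)| \leq (2R+1)^n$ for every $R \geq 0$. I would then decompose $\Lambda\setminus\{0\}$ into spherical shells $\{y : k-1 < |y| \leq k\}$, $k = 1,2,\dots$. Because $\Lambda$ has no nonzero point of norm below $1$, the first shell contains only minimum vectors of length exactly $1$, of which the packing bound at $R=1$ allows at most $3^n$; their total contribution is at most $3^n e^{-\pi n} = (3/e^\pi)^n$. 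For $k \geq 2$ the $k$-th shell has cardinality at most $(2k+1)^n$ and each term there is at most $e^{-\pi n (k-1)^2}$, so that shell contributes at most $\bigl((2k+1)\,e^{-\pi(k-1)^2}\bigr)^n$.

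Summing the shell bounds gives an estimate of the form $\sum_{k\geq 1} c_k^{\,n}$ with $c_1 = 3/e^\pi \approx 0.13$, $c_2 = 5/e^\pi \approx 0.22$, and $c_k \leq (2k+1)\,e^{-\pi(k-1)^2}$ decaying super-geometrically thereafter. The whole series is therefore dominated by its first two terms and stays comfortably under $2^{-n}$ for every $n \geq 1$.

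The only real obstacle is the quantitative bookkeeping in this last step: one must check that the numerical constants $c_k$ produce a tail sum at most $2^{-n}$. Thanks to the wide gap between $e^\pi \approx 23$ and $2$, this reduces to a short elementary verification, and no deeper machinery such as Poisson summation or a Banaszczyk-type tail inequality is needed. Notice also that Lemma \ref{lm1.2} is not used here; it will enter separately when one instantiates this bound for a $\phi$-cyclic lattice, by replacing $1/\lambda_1(L^*)$ with the computable quantity $1/\sqrt{\delta}$ coming from the eigenvalues of $B'B$.
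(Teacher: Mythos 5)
Your argument is correct, but it is genuinely different from what the paper does: the paper gives no proof at all, simply citing Lemma~3.2 of Micciancio--Regev \cite{D3}, whose proof in turn rests on Banaszczyk's Gaussian tail inequality $\rho(\Lambda\setminus\sqrt{n}B)<2^{-n}\rho(\Lambda)$, established via Poisson summation and a moment-generating-function argument. You instead reduce to the normalized inequality $\sum_{y\in\Lambda\setminus\{0\}}e^{-\pi n|y|^2}\leqslant 2^{-n}$ for a lattice of minimum distance $1$ and prove it by sphere packing plus a shell decomposition; the two essential points you correctly isolate are that the first shell contributes $e^{-\pi n}$ per point (because the minimum distance forces $|y|=1$ there, without which the $k=1$ term of the generic bound would be useless) and that $|\Lambda\cap B(0,R)|\leqslant(2R+1)^n$. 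The remaining bookkeeping does close: with $c_1=3e^{-\pi}\approx 0.130$, $c_2=5e^{-\pi}\approx 0.216$ and $c_k=(2k+1)e^{-\pi(k-1)^2}$ for $k\geqslant 3$, one has $\sum_k(2c_k)^n\leqslant\sum_k 2c_k\approx 0.69<1$ for all $n\geqslant 1$ since each $2c_k<1$, which is exactly $\sum_k c_k^n<2^{-n}$. What your route buys is a short, self-contained, completely elementary proof adequate here because of the large margin between $e^{\pi}$ and the relevant constants; what the cited Banaszczyk machinery buys is a sharper and more flexible tail bound (valid at all radii $c\sqrt{n}$ with $c\geqslant 1/\sqrt{2\pi}$, and stated relative to $\rho(\Lambda)$ rather than requiring the minimum-distance normalization), which is needed for the transference theorems and the finer smoothing-parameter estimates elsewhere in that literature. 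Your closing remark is also accurate: Lemma \ref{lm1.2} plays no role in this lemma and only enters in Theorem \ref{th5}, where $\lambda_1(L^*)$ is bounded below by the eigenvalue computation.
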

\begin{proof} See lemma 3.2 of \cite{D3}, or [3].
\end{proof}

\begin{lem}\label{lm4.2} Suppose that $L_1$ and $L_2$ are two full-rank lattices in $\mathbb{R}^n$, and $L_1 \subset L_2$, then for any $\epsilon>0$, we have
\begin{equation}
\eta_{\epsilon}(L_2)\leqslant \eta_{\epsilon}(L_1).
\end{equation}
\end{lem}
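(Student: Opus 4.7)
The plan is to leverage the elementary duality fact that a sublattice relation $L_1 \subset L_2$ reverses under taking duals, together with the monotonicity of $\rho_{1/s}$ in its domain of summation (since every Gaussian term is positive).

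First I would recall the standard duality observation: for any two full-rank lattices with $L_1 \subset L_2$, the dual lattices satisfy $L_2^* \subset L_1^*$. This is because $y \in L_2^*$ means $\langle x, y \rangle \in \mathbb{Z}$ for every $x \in L_2$, and in particular for every $x \in L_1 \subset L_2$, so $y \in L_1^*$.

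Next, fix $\epsilon > 0$ and set $s_0 = \eta_\epsilon(L_1)$. By the definition in (4.5), $\rho_{1/s_0}(L_1^*) \leq 1 + \epsilon$. Since $\rho_{1/s_0}(x) > 0$ for every $x$ and $L_2^* \subset L_1^*$, summing over the smaller set gives
\begin{equation*}
\rho_{1/s_0}(L_2^*) \;=\; \sum_{x \in L_2^*} \rho_{1/s_0}(x) \;\leq\; \sum_{x \in L_1^*} \rho_{1/s_0}(x) \;=\; \rho_{1/s_0}(L_1^*) \;\leq\; 1+\epsilon.
\end{equation*}
Hence $s_0$ lies in the set $\{s > 0 : \rho_{1/s}(L_2^*) \leq 1+\epsilon\}$ whose infimum defines $\eta_\epsilon(L_2)$. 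Therefore $\eta_\epsilon(L_2) \leq s_0 = \eta_\epsilon(L_1)$, which is exactly (4.11).

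There is no real obstacle here; the only minor point to be careful about is verifying the duality inclusion $L_2^* \subset L_1^*$ and using that $\rho_{1/s}$ takes strictly positive values so that dropping terms (those in $L_1^* \setminus L_2^*$) can only decrease the sum. Because the definition of $\eta_\epsilon$ is a minimum over a non-empty set of $s$ (the function $\rho_{1/s}$ being continuous and strictly decreasing in $s$ with limit $1$ as $s \to \infty$), the argument closes without any additional regularity hypothesis.
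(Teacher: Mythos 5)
Your proof is correct and follows essentially the same route as the paper: reverse the inclusion to get $L_2^* \subset L_1^*$, then use positivity of the Gaussian terms to conclude $\rho_{1/s}(L_2^*) \leqslant \rho_{1/s}(L_1^*) \leqslant 1+\epsilon$ at $s=\eta_\epsilon(L_1)$. The only cosmetic difference is that the paper writes $\rho_{1/s}(L_1^*)=1+\epsilon$ (using that $\rho_{1/s}$ is continuous and strictly decreasing in $s$), whereas you more carefully use only the inequality $\leqslant 1+\epsilon$, which suffices.
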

\begin{proof}Let $\eta_{\epsilon}(L_1)=s$, we are to show that $\eta_{\epsilon}(L_2)\leqslant s$. Since
\begin{equation*}
\rho_{1/s}(L_1^*)=1+\epsilon,\ \text{and}\ \sum\limits_{x\in L_1^*}e^{-\pi s^2 |x|^2}=1+\epsilon.
\end{equation*}
It is easy to check that $L_2^*\subset L_1^*$, it follows that
\begin{equation*}
1+\epsilon=\sum\limits_{x\in L_1^*}e^{-\pi s^2 |x|^2}\geqslant \sum\limits_{x\in L_2^*}e^{-\pi s^2 |x|^2},
\end{equation*}
which implies
\begin{equation*}
\rho_{1/s}(L_2^*)\leqslant 1+\epsilon,
\end{equation*}
and $\eta_{\epsilon}(L_2)\leqslant s=\eta_{\epsilon}(L_1)$, thus we have lemma \ref{lm4.2}.

\end{proof}

According to (2.4), the ideal matrix $H^*(f)$ with input vector $f\in \mathbb{R}^n$ is just the ordinary circulant matrix when $\phi(x)=x^n-1$. Next lemma shows that the transpose of a circulant matrix is still a circulant matrix. For any $g=\begin{pmatrix}
     g_0 \\
     g_1 \\
     \vdots \\
     g_{n-1}
\end{pmatrix}\in \mathbb{R}^n$, we denote $\overline{g}=\begin{pmatrix}
     g_{n-1} \\
     g_{n-2} \\
     \vdots \\
     g_{0}
\end{pmatrix}$, which is called the conjugation of $g$.\\

\begin{lem}\label{lm4.3} Let $\phi(x)=x^n-1$, then for any $g=\begin{pmatrix}
     g_0 \\
     g_1 \\
     \vdots \\
     g_{n-1}
\end{pmatrix}\in \mathbb{R}^n$, we have
\begin{equation}
(H^*(g))'=H^*(H\overline{g}).
\end{equation}
\end{lem}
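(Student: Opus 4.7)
The plan is to reduce the identity to a clean polynomial identity in $H$ via part (i) of Theorem~\ref{th2}, and then compare coefficients on both sides after taking transposes.

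First, I would apply (i) of Theorem~\ref{th2} to expand
\[
H^*(g)\;=\;g_0 I_n+g_1 H+g_2 H^2+\cdots+g_{n-1}H^{n-1}.
\]
Now I would specialize to $\phi(x)=x^n-1$, where (2.3) shows that $H$ is simply the standard cyclic permutation matrix (a $1$ in the top-right corner and $1$'s on the first subdiagonal). In particular $H^n=I_n$, and since a permutation matrix is orthogonal, $H'=H^{-1}=H^{n-1}$. Taking transposes in the expansion above yields
\[
(H^*(g))'\;=\;g_0 I_n+g_1 (H')+g_2 (H')^2+\cdots+g_{n-1}(H')^{n-1}\;=\;g_0 I_n+\sum_{k=1}^{n-1}g_k H^{n-k}.
\]
Reindexing with $j=n-k$ gives the compact form
\[
(H^*(g))'\;=\;g_0 I_n+\sum_{j=1}^{n-1}g_{n-j}H^{j}.\qquad(\star)
\]

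Next, I would compute $H\overline{g}$ explicitly. Since $\overline{g}=(g_{n-1},g_{n-2},\ldots,g_0)'$ and $H$ is the cyclic down-shift, one obtains
\[
H\overline{g}\;=\;(g_0,\,g_{n-1},\,g_{n-2},\,\ldots,\,g_1)',
\]
so the $0$-th component equals $g_0$ and the $j$-th component equals $g_{n-j}$ for $1\leq j\leq n-1$. Applying (i) of Theorem~\ref{th2} again,
\[
H^*(H\overline{g})\;=\;(H\overline{g})_0 I_n+\sum_{j=1}^{n-1}(H\overline{g})_j H^{j}\;=\;g_0 I_n+\sum_{j=1}^{n-1}g_{n-j}H^{j},
\]
which matches $(\star)$, proving the identity.

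The only nontrivial point — and hence the main thing to be careful about — is verifying that in this particular case $H'=H^{n-1}$; this is not a general feature of the rotation matrix $H_\phi$, but it holds here because $\phi(x)=x^n-1$ forces $H$ to be an (orthogonal) permutation matrix. Once this observation is in place, the rest is just bookkeeping: rewrite everything as a polynomial in $H$ via Theorem~\ref{th2}(i), transpose, reindex, and match the coefficients against the explicit formula for $H\overline{g}$.
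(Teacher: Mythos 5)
Your proof is correct. It rests on the same key observation as the paper's --- that for $\phi(x)=x^n-1$ the rotation matrix $H$ is a cyclic permutation matrix, hence orthogonal with $H'=H^{-1}=H^{n-1}$ --- but the route after that is genuinely different. The paper asserts (without verification) a row decomposition $H^*(g)=\bigl(\overline{g}'H_1,\ \overline{g}'H_1^2,\ \dots,\ \overline{g}'H_1^n\bigr)'$ with $H_1=H'$, and obtains the result by transposing this block form into the column form $[H\overline{g},H(H\overline{g}),\dots,H^{n-1}(H\overline{g})]$. You instead transpose the polynomial expansion $H^*(g)=\sum_k g_kH^k$ from Theorem \ref{th2}(i), reduce $(H')^k$ to $H^{n-k}$, reindex, and match coefficients against an explicit computation of $H\overline{g}=(g_0,g_{n-1},\dots,g_1)'$. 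Your version has the advantage of being fully self-contained and checkable term by term (every step is either Theorem \ref{th2}(i) or arithmetic modulo $H^n=I_n$), whereas the paper's version leaves its central identity as an exercise; the paper's version, in exchange, exhibits the transpose directly as an ideal matrix in column-block form without any reindexing. Both correctly isolate the one point that does not generalize to arbitrary $\phi$, namely the orthogonality of $H$.
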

\begin{proof}  Since $\phi(x)=x^n-1$, then $H=H_{\phi}$ (see(2.3)) is an orthogonal matrix, and we have $H^{-1}=H^{n-1}=H'$. We write $H_1=H'=H^{-1}$. The following identity is easy to verify
\begin{equation*}
H^*(g)=\begin{pmatrix}
     \overline{g}'H_1 \\
     \overline{g}'H_1^2 \\
     \vdots \\
     \overline{g}'H_1^n
\end{pmatrix}
\end{equation*}
It follows that
\begin{equation*}
(H^*(g))'=[H\overline{g},H(H\overline{g}),\dots,H^{n-1}(H\overline{g})]=H^*(H\overline{g}),
\end{equation*}
and we have the lemma.

\end{proof}

\begin{lem}\label{lm4.4}
 Suppose that $g\in \mathbb{R}^n$ and the circulant matrix $H^*(g)$ is invertible. Let $A=(H^*(g))'H^*(g)$, then all characteristic values of $A$ are given by
\begin{equation*}
\{|g(\theta_1)|^2,|g(\theta_2)|^2,\dots,|g(\theta_n)|^2\},
\end{equation*}
where $\theta_i^n=1\ (1\leqslant i\leqslant n)$ are the $n$-th roots of unity.
\end{lem}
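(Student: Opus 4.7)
The plan is to reduce the computation of eigenvalues of $A=(H^*(g))'H^*(g)$ to evaluating a single polynomial at the $n$-th roots of unity, by exploiting that $A$ is itself a circulant (hence ideal) matrix. Then part (iii) of Theorem \ref{th2} immediately diagonalizes $A$, and the remaining work is to identify the resulting polynomial values with $|g(\theta_i)|^2$.

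First I would apply Lemma \ref{lm4.3} to rewrite $(H^*(g))'=H^*(H\overline{g})$. Together with property (ii) of Theorem \ref{th2}, this gives
\begin{equation*}
A=(H^*(g))'H^*(g)=H^*(H\overline{g})H^*(g)=H^*\bigl((H\overline{g})*g\bigr),
\end{equation*}
so $A$ itself is an ideal matrix for the polynomial $\phi(x)=x^n-1$. By property (iii) of Theorem \ref{th2}, the eigenvalues of $A=H^*(f)$, where $f=(H\overline{g})*g$, are precisely $\{f(\theta_1),\dots,f(\theta_n)\}$, and under the isomorphism $t$ of Section 2 we have $f(x)\equiv \tilde g(x)\,g(x)\pmod{x^n-1}$, where $\tilde g(x)=t^{-1}(H\overline{g})$. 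So $f(\theta_i)=\tilde g(\theta_i)g(\theta_i)$ for every $n$-th root of unity $\theta_i$.

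The remaining step — and the only place anything needs checking — is to verify that $\tilde g(\theta_i)=\overline{g(\theta_i)}$. Using (2.11), multiplication by $H$ corresponds to multiplication by $x$ modulo $x^n-1$; since $\overline{g}$ corresponds to the polynomial $g_{n-1}+g_{n-2}x+\cdots+g_0 x^{n-1}$, we get
\begin{equation*}
\tilde g(x)=x\bigl(g_{n-1}+g_{n-2}x+\cdots+g_0x^{n-1}\bigr)\bmod (x^n-1)=g_0+g_{n-1}x+g_{n-2}x^2+\cdots+g_1x^{n-1}.
\end{equation*}
Reindexing $j=n-k$, evaluation at $\theta_i$ gives $\tilde g(\theta_i)=g_0+\sum_{k=1}^{n-1}g_k\theta_i^{n-k}=\sum_{k=0}^{n-1}g_k\theta_i^{-k}=g(\theta_i^{-1})$, using $\theta_i^n=1$. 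Since $|\theta_i|=1$ we have $\theta_i^{-1}=\overline{\theta_i}$, and because $g$ has real coefficients $g(\overline{\theta_i})=\overline{g(\theta_i)}$. Hence $\tilde g(\theta_i)g(\theta_i)=|g(\theta_i)|^2$, which is the desired list of eigenvalues. Invertibility of $H^*(g)$ (property (v) of Theorem \ref{th2}) guarantees all $g(\theta_i)\neq 0$, consistent with $A$ being positive definite.

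The only real obstacle is the bookkeeping in step three: one must be careful to distinguish the "reversal" $\overline{g}$ from the "conjugate polynomial" $g(x^{-1})\bmod(x^n-1)$, and to track the extra factor of $H$ (i.e.\ the shift by $x$) that appears in Lemma \ref{lm4.3}. Once that index juggling is done, the identification $\tilde g(\theta_i)=g(\theta_i^{-1})=\overline{g(\theta_i)}$ is immediate from $\theta_i^n=1$ and the reality of the coefficients $g_k$.
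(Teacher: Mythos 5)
Your proof is correct and takes essentially the same route as the paper: both use Lemma \ref{lm4.3} together with part (ii) of Theorem \ref{th2} to write $A=H^*\bigl((H\overline{g})*g\bigr)$ and then read the eigenvalues off from part (iii) of Theorem \ref{th2}. The only difference is the final identification, where the paper computes the coefficients of the product polynomial directly as cyclic autocorrelations and writes the result as $|g(x)|^2$, whereas you evaluate the factor $t^{-1}(H\overline{g})$ at $\theta_i$ and show it equals $g(\theta_i^{-1})=\overline{g(\theta_i)}$ --- a slightly more careful version of the same step, since the paper's identity really only holds at points of the unit circle.
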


\begin{proof}  By lemma \ref{lm4.3} and (ii) of Theorem \ref{th2}, we have
\begin{equation*}
A=H^*(H\overline{g})H^*{g}=H^*(H^*(H\overline{g})g)=H^*(g''),
\end{equation*}
where $g''=H^*(H\overline{g})g$. Let $g''(x)=t^{-1}(g'')$ is the corresponding polynomial of $g''$. By (iii) of Theorem \ref{th2} all characteristic values of $A$ are given by
\begin{equation}
\{g''(\theta_1),g''(\theta_2),\dots,g''(\theta_n)\},\ \theta_i^n=1,\ 1\leqslant i\leqslant n.
\end{equation}
Let $g=\begin{pmatrix}
     g_0 \\
     g_1 \\
     \vdots \\
     g_{n-1}
\end{pmatrix}\in \mathbb{R}^n$. It is easy to see that
\begin{equation*}
g''(x)=\sum\limits_{i=0}^{n-1}g_i^2+(\sum\limits_{i=0}^{n-1}g_i g_{1-i})x+\cdots+(\sum\limits_{i=0}^{n-1}g_i g_{(n-1)-i})x^{n-1}=|g(x)|^2,
\end{equation*}
where $g_{-i}=g_{n-i}$ for all $1\leqslant i\leqslant n-1$, then the lemma follows at once.

\end{proof}

By definition \ref{def3.2}, if $g\in \mathbb{R}^n$ is a prime spot, then there is a unique polynomial $u(x)\in \overline{R}$ such that $u(x)g(x)\equiv 1$ (mod $\phi(x)$). We define a new vector $T_g$ and its corresponding polynomial $T_g(x)$ by
\begin{equation}
T_g=H\overline{u},\ \text{and}\ T_g(x)=t^{-1}(H\overline{u}).
\end{equation}
If $g\in \mathbb{Z}^n$ is an integer vector, then $T_g\in \mathbb{Z}^n$ is also an integer vector, and $T_g(x)\in \mathbb{Z}[x]$ is a polynomial with integer coefficients. Our main result on smoothing parameter is the following theorem.\\

\begin{thm}\label{th5} Let $\phi(x)=x^n-1$, $L\subset \mathbb{R}^n$ be a full-rank $\phi$-cyclic lattice, then for any prime spots $g\in L$, we have
\begin{equation}
\eta_{2^{-n}}(L)\leqslant \sqrt{n} (\min\{|T_g(\theta_1)|,|T_g(\theta_2)|,\dots,|T_g(\theta_n)|\})^{-1},
\end{equation}
where $\theta_i^n=1$, $1\leqslant i\leqslant n$, and $T_g(x)$ is given by (4.14).
\end{thm}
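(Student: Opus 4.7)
The plan is to sandwich $L$ between a minimal cyclic sublattice generated by the prime spot $g$ and apply the smoothing-parameter monotonicity, then translate the resulting dual-lattice minimum distance into the language of $T_g$ via the circulant-matrix machinery already developed.

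First, since $g \in L$ and $L$ is $\phi$-cyclic, every $H^k g$ lies in $L$, so $L(H^*(g)) \subseteq L$. Because $g$ is a prime spot, Theorem \ref{th2}(v) says $H^*(g)$ is invertible, so $L(H^*(g))$ is full-rank, matching the rank of $L$. By Lemma \ref{lm4.2}, $\eta_{2^{-n}}(L) \leq \eta_{2^{-n}}(L(H^*(g)))$, so it suffices to bound the right-hand side. Combining this with Lemma \ref{lm4.1} gives
\begin{equation*}
\eta_{2^{-n}}(L) \leq \sqrt{n}\bigl/\lambda_1\bigl(L(H^*(g))^*\bigr).
\end{equation*}

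Next I identify the dual lattice explicitly. For an invertible square generator $B$, one has $L(B)^* = L((B^{-1})')$. Applying Lemma \ref{lm2.2} with $u(x) g(x) \equiv 1 \pmod{\phi(x)}$ gives $(H^*(g))^{-1} = H^*(u)$, and then Lemma \ref{lm4.3} (applied to $u$ in place of $g$) yields
\begin{equation*}
\bigl((H^*(g))^{-1}\bigr)' = (H^*(u))' = H^*(H\overline{u}) = H^*(T_g),
\end{equation*}
where I used the definition $T_g = H\overline{u}$ from (4.14). Hence $L(H^*(g))^* = L(H^*(T_g))$, a full-rank $\phi$-cyclic lattice generated by the circulant matrix associated with $T_g$.

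The final step is to lower-bound $\lambda_1(L(H^*(T_g)))$. Observe that $H^*(T_g) = ((H^*(g))^{-1})'$ is invertible, so $T_g$ is a prime spot. Applying Lemma \ref{lm4.4} to $T_g$, the eigenvalues of $(H^*(T_g))' H^*(T_g)$ are exactly $\{|T_g(\theta_1)|^2, \ldots, |T_g(\theta_n)|^2\}$, where $\theta_i^n = 1$. Feeding this into Lemma \ref{lm1.2} gives
\begin{equation*}
\lambda_1\bigl(L(H^*(T_g))\bigr) \geq \min_{1\leq i\leq n} |T_g(\theta_i)|.
\end{equation*}
Chaining the three inequalities above yields (4.15). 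The main technical point is the dual-lattice identification: one must carefully track the interaction between the circulant transpose formula (Lemma \ref{lm4.3}), the inverse formula (Lemma \ref{lm2.2}), and the standard duality $L(B)^* = L((B')^{-1})$ in order to realise the dual as another circulant lattice whose minimum distance is controlled by evaluations at roots of unity; once this is done, everything else is assembly.
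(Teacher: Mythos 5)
Your proof is correct and follows essentially the same route as the paper's: reduce to the minimal cyclic sublattice $L(H^*(g))$ via Lemma \ref{lm4.2}, identify its dual as $L(H^*(T_g))$ using Lemma \ref{lm2.2} and Lemma \ref{lm4.3}, bound $\lambda_1$ of the dual through Lemma \ref{lm4.4} and Lemma \ref{lm1.2}, and finish with Lemma \ref{lm4.1}. Your write-up is in fact slightly more explicit than the paper's at the dual-lattice step, where you spell out the identity $L(B)^*=L((B^{-1})')$ that the paper uses tacitly.
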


\begin{proof}  Let $g\in L$ be a prime spot, by  lemma \ref{lm4.2}, we have
\begin{equation}
L(H^*(g))\subset L\Rightarrow \eta_{\epsilon}(L)\leqslant \eta_{\epsilon}(L(H^*(g))),\ \forall \epsilon>0.
\end{equation}
To estimate the smoothing parameter of $L(H^*(g))$, the dual lattice of $L(H^*(g))$ is given by
\begin{equation*}
L(H^*(g))^*=L((H^*(u))')=L(H^*(H\overline{u}))=L(H^*(T_g)),
\end{equation*}
where $u(x)\in\overline{R}$ and $u(x)g(x)\equiv 1$ (mod $x^n-1$), and $T_g$ is given by (4.14). Let $A=(H^*(T_g))'H^*(T_g)$, by lemma \ref{lm4.4}, all characteristic values of $A$ are
\begin{equation*}
\{|T_g(\theta_1)|^2,|T_g(\theta_2)|^2,\dots,|T_g(\theta_n)|^2\}.
\end{equation*}
By lemma \ref{lm1.2}, the minimum distance $\lambda_1 (L(H^*(g))^*)$ is bounded by
\begin{equation}
\lambda_1 (L(H^*(g))^*)\geqslant \min \{|T_g(\theta_1)|,|T_g(\theta_2)|,\dots,|T_g(\theta_n)|\}.
\end{equation}
Now, Theorem \ref{th5} follows from lemma \ref{lm4.1} immediately.
\end{proof}

Let $L=L(B)$ be a full-rank lattice and $B=[\beta_1,\beta_2,\dots,\beta_n]$. We denote by $B^*=[\beta_1^*,\beta_2^*,\dots,\beta_n^*]$ the Gram-Schmidt orthogonal vectors $\{\beta_i^*\}$ of the ordered basis $B=\{\beta_i\}$. It is a well-known conclusion that
\begin{equation*}
\lambda_1(L)\geqslant |B^*|=\min\limits_{1\leqslant i\leqslant n} |\beta_i^*|,
\end{equation*}
which yields by lemma \ref{lm4.1} the following upper bound
\begin{equation}
\eta_{2^{-n}}(L)\leqslant \sqrt{n} |B_0^*|^{-1},
\end{equation}
where $B_0^*$ is the orthogonal basis of dual lattice $L^*$ of $L$.

For a $\phi$-cyclic lattice $L$, we observe that the upper bound (4.17) is always better than (4.18) by numerical testing, we give two examples here.

\begin{example}
Let $n=3$ and $\phi(x)=x^3-1$, the rotation matrix $H$ is
\begin{equation*}
H=\begin{pmatrix}
     0 & 0 & 1 \\
     1 & 0 & 0 \\
     0 & 1 & 0
\end{pmatrix}.
\end{equation*}
We select a $\phi$-cyclic lattice $L=L(B)$, where
\begin{equation*}
B=\begin{pmatrix}
     1 & 1 & 1 \\
     0 & 1 & 1 \\
     0 & 0 & 1
\end{pmatrix}.
\end{equation*}
Since $L=\mathbb{Z}^3$, thus $L$ is a $\phi$-cyclic lattice. It is easy to check
\begin{equation*}
|B_0^*|=\min\limits_{1\leqslant i\leqslant 3}|\beta_i^*|=\frac{\sqrt{3}}{3}.
\end{equation*}
On the other hand, we randomly find a prime spot $g=\begin{pmatrix}
0\\
0\\
1
\end{pmatrix}\in L$ and $g(x)=x^2$. Since $xg(x)\equiv 1$ (mod $x^3-1$), we have $T_g(x)=x^2$, it follows that $|T_g(\theta_1)|=|T_g(\theta_2)|=|T_g(\theta_3)|=1$, and
\begin{equation*}
\min\limits_{1\leqslant i\leqslant 3}|T_g(\theta_i)|^{-1}\leqslant |B_0^*|^{-1}=\sqrt{3}.
\end{equation*}
\end{example}

\begin{example}
Let $n=4$ and $\phi(x)=x^4-1$, the rotation matrix $H$ is
\begin{equation*}
H=\begin{pmatrix}
     0 & 0 & 0 & 1 \\
     1 & 0 & 0 & 0 \\
     0 & 1 & 0 & 0 \\
     0 & 0 & 1 & 0
\end{pmatrix}.
\end{equation*}
We select a $\phi$-cyclic lattice $L=L(B)$, where
\begin{equation*}
B=\begin{pmatrix}
     1 & 1 & 1 & 1 \\
     0 & 1 & 1 & 1 \\
     0 & 0 & 1 & 1 \\
     0 & 0 & 0 & 1
\end{pmatrix}.
\end{equation*}
Since $L=\mathbb{Z}^4$, thus $L$ is a $\phi$-cyclic lattice. It is easy to check
\begin{equation*}
|B_0^*|=\min\limits_{1\leqslant i\leqslant 4}|\beta_i^*|=\frac{1}{2}.
\end{equation*}
On the other hand, we randomly find a prime spot $g=\begin{pmatrix}
-2 \\
1 \\
0 \\
0
\end{pmatrix}\in L$ and $g(x)=x-2$. Since $(\frac{1}{7}x^3-\frac{1}{7}x^2-\frac{2}{7}x-\frac{5}{7})g(x)\equiv 1$ (mod $x^4-1$), we have $T_g(x)=-\frac{2}{7}x^3-\frac{1}{7}x^2+\frac{1}{7}x-\frac{5}{7}$, it follows that $|T_g(\theta_1)|=1$, $|T_g(\theta_2)|=|T_g(\theta_3)|=|T_g(\theta_4)|=\frac{5}{7}$, and
\begin{equation*}
\min\limits_{1\leqslant i\leqslant 4}|T_g(\theta_i)|^{-1}=\frac{7}{5}\leqslant |B_0^*|^{-1}=2.
\end{equation*}
\end{example}

 \vspace{-0.4cm}{\footnotesize}

\end{document}